\newtheorem{thm}{Theorem}
\newtheorem{lemma}{Lemma}
\newtheorem{corollary}{Corollary}
\newtheorem{remark}{Remark}
\newtheorem{prop}{Proposition}
\renewenvironment{proof}{{\bf Proof.}}{\qed}
\newcommand{\Rnum}[1]{\uppercase\expandafter{\romannumeral #1\relax}}
\newcommand{\rnum}[1]{\lowercase\expandafter{\romannumeral #1\relax}}
\newcommand{\ip}[2]{\langle{#1},{#2}\rangle}
\title{\protect\textnormal{Robust Control Design and Analysis Based on Lifting Linearization of Nonlinear Systems Under Uncertain Initial Conditions}}
\author{
Sourav Sinha\thanks{Corresponding author} \\
Kevin T. Crofton Department of Aerospace and Ocean Engineering \\
Virginia Tech \\
Blacksburg, VA 24061, USA \\
\texttt{srvsinha@vt.edu} \\
\And
Mazen Farhood \\
Kevin T. Crofton Department of Aerospace and Ocean Engineering \\
Virginia Tech \\
Blacksburg, VA 24061, USA \\
\texttt{farhood@vt.edu}
}
\date{}
\begin{document}
\maketitle

\begin{abstract}
This paper presents a robust control synthesis and analysis framework for nonlinear systems with uncertain initial conditions. First, a deep learning-based lifting approach is proposed to approximate nonlinear dynamical systems with linear parameter-varying (LPV) state-space models in higher-dimensional spaces while simultaneously characterizing the uncertain initial states within the lifted state space. Then, convex synthesis conditions are provided to generate full-state feedback non-stationary LPV (NSLPV) controllers for the lifted LPV system. A performance measure similar to the $\ell_2$-induced norm is used to provide robust performance guarantees in the presence of exogenous disturbances and uncertain initial conditions. The paper also includes results for synthesizing full-state feedback linear time-invariant controllers and output feedback NSLPV controllers. Additionally, a robustness analysis approach based on integral quadratic constraint (IQC) theory is developed to analyze and tune the synthesized controllers while accounting for noise associated with state measurements. This analysis approach characterizes model parameters and disturbance inputs using IQCs to reduce conservatism. Finally, the effectiveness of the proposed framework is demonstrated through two illustrative examples.
\end{abstract}

\keywords{Lifted models\and LPV control\and robust control\and integral quadratic constraints\and uncertain initial conditions}

\section{Introduction} \label{sec_intro}

Control design and analysis methods for linear systems are typically far less computationally intensive compared to their counterparts for nonlinear systems \cite{TSUKAMOTO2021135,LuDoyle97,megretski1997system,veenman2016robust}. By far the most common approach to simplifying a nonlinear system for control design purposes is to linearize the dynamic equations about a trajectory in order to generate a linear  approximation of the nonlinear system. For time-invariant nonlinear systems controlled about an equilibrium point, this approximation will be in the form of a linear time-invariant (LTI) model. However, LTI models derived using traditional Jacobian linearization techniques are valid only locally, within a small envelope around the equilibrium point. Consequently, the stability and performance guarantees generated for these linear models may not necessarily apply to the actual nonlinear system, particularly when the operating envelope is large, possibly due to significant initial state perturbations or the presence of exogenous disturbances. To overcome this limitation, we approximate nonlinear dynamical systems using linear parameter-varying (LPV) models within an envelope around a specific operating point. These LPV models have state-space matrix-valued functions that explicitly depend on the so-called scheduling parameters, which, in this work, are defined as nonlinear functions of the system state. This enables LPV systems to represent nonlinear dynamics more accurately over a larger operating envelope compared to LTI systems. 

Commonly used approaches for LPV modeling of nonlinear systems include linearization around a family of equilibria \cite{muniraj2017path, Fry2019} and quasi-LPV methods \cite{Toth2010, marcos2004development}. The former is suitable when the system has multiple operating points that can be parameterized as rational functions of time-varying parameters. This method yields an LPV system that approximates the nonlinear dynamics around the operating points; however, its validity remains confined to small envelopes surrounding these points. In contrast, the quasi-LPV approach offers an exact LPV representation but is restricted to a limited class of nonlinear systems \cite{Toth2010}. Moreover, for complex systems, quasi-LPV methods may yield systems with a large number of scheduling parameters that depend on the system's state and/or input, introducing conservatism into the synthesis and analysis problems, as shown in our previous work \cite{sinha2021lft}.

In recent years, Koopman operator theory (KOT) has attracted significant attention for its applications in data-driven modeling of nonlinear dynamical systems \cite{lusch2018deep, williams2015data, li2017extended, otto2019linearly}. The Koopman operator, introduced by B. O. Koopman \cite{koopman1931hamiltonian}, provides a linear embedding of unforced, time-invariant nonlinear systems within a higher-dimensional space of functions, called observables. Unlike linearized models that are only valid locally, the Koopman operator provides a globally accurate linear representation of nonlinear system dynamics across the entire state space. However, the Koopman operator is typically infinite-dimensional. Finite-dimensional linear embeddings have been identified for only a limited number of nonlinear systems \cite{brunton2016koopman}, and for systems with multiple fixed points, the finite dimensional linear representations, if existent, are shown to be discontinuous \cite{bakker2019learning, liu2023non}.

Consequently, data-driven lifting methods are commonly used to learn finite-dimensional linear approximations of nonlinear systems within the basin of attraction of a fixed point. These methods include extended dynamic mode decomposition (EDMD) \cite{williams2015data} and deep learning-based techniques \cite{lusch2018deep, otto2019linearly}, which are tied to KOT. While EDMD methods are computationally efficient compared to deep-learning approaches, they require a priori selection of the set of observables, presenting a challenge in choosing an optimal set that minimizes the error induced due to finite subspace approximation. In contrast, deep learning methods use a trainable neural network to parameterize the observables, which are learned simultaneously with the system matrices in the lifted state space. This enhances approximation accuracy even with a reduced number of observables \cite{li2017extended}, albeit at the cost of increased computational complexity. The key idea involves learning a nonlinear mapping that ``lifts" the original system state to a higher-dimensional space, where the dynamics of the ``lifted” system become approximately linear. This approach has also been extended to actuated nonlinear systems. In \cite{surana2016koopman,goswami2017global}, it is demonstrated that control-affine nonlinear systems can be exactly represented by infinite-dimensional bilinear systems. In \cite{iacob2024koopman}, it is shown that the lifted bilinear representation of actuated nonlinear systems can be interpreted as LPV models, and in \cite{eyuboglu2024data}, the EDMD approach is utilized to learn lifted LPV approximations of nonlinear systems. In this work, we adopt a deep learning framework, parameterizing both the ``lifting" function and the scheduling parameter map with a neural network. This enables the generation of effective lifted LPV approximations with a relatively small number of parameters. While our work does not directly deal with the Koopman operator, for completeness a brief overview of KOT will be provided, along with results on lifted models from the literature that are relevant to the work done in this paper.

The control design problem for nonlinear systems using bilinear lifted models has been extensively studied in the literature. For instance, model predictive control strategies based on bilinear lifted models have been proposed in \cite{folkestad2021koopman, narasingam2023data}. Control Lyapunov function-based approaches for designing stabilizing state-feedback controllers have also been provided in previous works \cite{huang2018feedback, zinage2023neural}. Additionally, stabilizing state-feedback controllers with a linear dependence on the lifted state have been synthesized by solving linear matrix inequalities (LMIs) \cite{sinha2022data, strasser2023robust}. LMI-based approaches for synthesis of nonlinear state-feedback controllers have also been proposed in the literature \cite{strasser2024koopman, strasser2024safedmd}. In \cite{eyuboglu2024data}, robust control design for lifted LPV models is addressed; however, its scope is restricted to designing LTI controllers without performance guarantees and under the assumption that the LPV dynamics are open-loop bounded-input, bounded-output (BIBO) stable. The LPV model is particularly appealing because gain-scheduling or LPV control strategies can be developed for nonlinear systems based on LPV plant models \cite{rugh2000research, mohammadpour2012control}. An LPV system, with at most a rational dependence on the parameters, can also be expressed as a linear fractional transformation (LFT) on uncertainties, i.e., an interconnection of a nominal LTI system and a perturbation operator consisting of uncertainties. Robust synthesis techniques \cite{packard1994gain, apkarian1995convex, farhood2008controlNSLPV} and robustness analysis methods \cite{lu1996stabilization, megretski1997system} can then be utilized to generate LPV controllers for the plant and compute the robust performance level (upper bound on the worst-case performance) of the resulting closed-loop system, respectively.

In this work, we address the robust control design problem for nonlinear systems with uncertain initial conditions, where the initial state value is constrained to lie within a predefined ellipsoid. Existing lifting-based control approaches \cite{narasingam2023data, zinage2023neural, sinha2022data, strasser2024safedmd} provide stability guarantees for bilinear lifted models under non-zero initial conditions. However, in these works, the region of attraction is defined in the higher-dimensional state space and is a byproduct of the synthesis approach, rather than being defined a priori. The objective of this work is to design controllers that provide both stability and performance guarantees for the lifted system in the presence of exogenous disturbances. Additionally, these guarantees should hold for all possible initial state values within an ellipsoid in the lifted state space, which is derived from the predefined ellipsoid in the original state space. As in aforementioned works, the lifted system is only an approximation of the nonlinear system, and hence, the stability and performance guarantees obtained from analysis for the lifted system cannot be extended to the nonlinear system unless the approximation errors are appropriately incorporated into the analysis.

Deriving error bounds for data-driven, finite-dimensional approximations of the Koopman operator is an active research area but not the focus of this work. For completeness, we provide a brief overview of key contributions from the literature. Under some assumptions, N{\"u}ske et al. \cite{nuske2023finite} established probabilistic bounds on the approximation error arising from the finite number of data samples used in EDMD. Schaller et al. \cite{schaller2023towards} extended this work to include the projection error due to a finite number of observables. Similar work was carried out for kernel EDMD \cite{philipp2023error}. In addition, error bounds proportional to the state and control, rather than uniform bounds, have been recently derived \cite{strasser2023robust,strasser2024koopman,bold2024data}. These works leverage approximate models with error bounds to design controllers for nonlinear systems with stability guarantees. However, they only account for the error due to the finitely many data points used in EDMD and assume the existence of an exact finite-dimensional linear approximation of the nonlinear system. Other works on designing stabilizing Koopman-based controllers for nonlinear systems \cite{strasser2023robust, mamakoukas2022robust} account for modeling errors from both finite data and observables by introducing an additive signal in the lifted state equation. In contrast, Eyuboglu et al. \cite{eyuboglu2024data,eyuboglu2024koopman} add a non-parametric dynamic uncertainty to the linear approximate model under the assumption that it is bounded.

We propose a deep learning approach for learning lifted LPV approximations of the nonlinear system while simultaneously characterizing the uncertain initial conditions within the higher-dimensional state space. The scheduling parameters are defined as nonlinear functions of the lifted state. We build upon the method proposed in our previous work \cite{farhood2021lpv} to synthesize static state-feedback nonstationary LPV (NSLPV) controllers for the lifted LPV model. NSLPV systems have state-space matrix-valued functions with explicit dependence on both the scheduling parameters and time. They constitute a fairly general class of systems, which include LTI, linear time-varying (LTV), and standard (stationary) LPV systems \cite{farhood2008controlNSLPV, farhood2007model}. When interpreted in the original state space, these NSLPV controllers become nonlinear with explicit dependence on time.

While a stationary LPV controller can be synthesized for the LPV model using the proposed approach, an NSLPV controller has the potential to enhance closed-loop performance of systems with non-zero initial conditions, as demonstrated in our previous works \cite{farhood2021lpv, farhood2008control}. We solve the synthesis conditions under the assumption that the system states are exactly measurable. This assumption facilitates the construction of static controllers instead of dynamic ones, which can be computationally intensive to design and implement, particularly when the dimension of the lifted state is very large. Additionally, we propose a robustness analysis approach based on integral quadratic constraint (IQC) theory \cite{megretski1997system} to compute the robust performance level of the resulting closed-loop NSLPV systems. IQC analysis is a flexible and efficient tool for analyzing uncertain systems expressed in LFT form. One of its key strengths lies in its ability to handle a wide range of uncertainties, including static and dynamic, LTI and LTV perturbations, nonlinear (sector-bounded, norm-bounded, slope-restricted, and passive) uncertainties, and delays \cite{megretski1997system, veenman2016robust}. Moreover, it allows the use of signal IQC multipliers to characterize disturbance sets effectively. Signal IQC multipliers for disturbances such as white noise, band-limited signals, constant signals, monotonic signals (increasing or decreasing), and signals restricted to a finite time interval are given in \cite{jonsson1999iqc, fry2021robustness}. 
The proposed analysis approach extends the results of our previous work \cite{farhood2024robustness} to handle signal IQCs and nonlinear uncertainties. We use this approach to analyze the synthesized static controllers while accounting for the measurement noise. A tuning routine based on IQC analysis is also provided to guide the control design process. Finally, we apply the proposed approach to design controllers for two complex examples, namely, a six-degrees-of-freedom (6-DOF) unmanned aircraft system (UAS) and a double pendulum. We demonstrate that the lifted LPV models constitute a better approximation of the nonlinear system compared to linearized models, and the controllers designed based on these models outperform those designed using linearized models.

All in all, the main contribution of this work is threefold. First, we extend existing deep learning-based lifting linearization techniques to nonlinear systems with uncertain initial conditions. Specifically, we propose a learning-based approach for simultaneously characterizing the uncertain initial states in the lifted state space while learning a higher-dimensional linear approximation of the nonlinear dynamics.
We also extend our previous synthesis result \cite{farhood2021lpv} developed for NSLPV systems with uncertain initial conditions to accommodate a more appropriate characterization of the structured set of possible initial state values that arises in this work and ultimately design static and dynamic NSLPV controllers for the lifted LPV approximation of the nonlinear system; the static synthesis result is also specialized to the case of LTI systems.
Additionally, we present a robustness analysis framework based on IQC theory to evaluate the robust performance of the resulting closed-loop NSLPV systems, in which the lifted LPV plant model approximately captures the behavior of the original nonlinear system. The proposed analysis approach extends the results of our previous work \cite{farhood2024robustness} to handle signal IQCs and nonlinear uncertainties.

The rest of the paper is organized as follows. Section~\ref{sec_preliminaries} introduces the notation and provides a brief background on KOT. Section~\ref{sec_KoopmanModeling} details the lifted LPV modeling approach. The main NSLPV controller synthesis result, along with a complementary result specialized for LTI systems, is presented and proved in Section~\ref{sec_controldesign}. Section~\ref{sec_iqcAnalysis} presents the main IQC analysis result and the analysis-based controller tuning routine. Illustrative examples demonstrating the utility of the proposed approach are given in Section~\ref{sec_examples}. The limitations of the approach are discussed in Section~\ref{sec_limitations}, and concluding remarks are provided in Section~\ref{sec_conclusions}. Finally, an appendix is included, which provides an output feedback synthesis result for a more general class of NSLPV plants.

\section{Preliminaries}  \label{sec_preliminaries}

\subsection{Notation}
The sets of non-negative integers, positive real scalars, real vectors of dimension $n$, real matrices of size $n\times m$, and real $n\times n$ symmetric matrices are denoted by $\mathbb{Z}_+$, $\mathbb{R}_{++}$, $\mathbb{R}^n$, $\mathbb{R}^{n \times m}$, and $\mathbb{S}^n$, respectively. 
The set $\{1,\,2,\dots,\,p\}$ is denoted by $\mathbb{N}_p$. The set $\{0,\,1,\dots,\,p\}$ is denoted by $\mathbb{Z}_p$.
We write $X \succ 0 $ to indicate that the symmetric matrix $X$ is positive definite. The matrix elements which can be inferred from symmetry are denoted by $\ast$. The transpose of a matrix $X$ is denoted by $X^T$.
$I_n$ denotes an $n\times n$ identity matrix, $0_{n\times m}$ denotes an $n\times m$ zero matrix, and $\bm{1}_n$ denotes a vector of dimension $n$ whose components are all one. The subscripts are dropped when the dimensions $n$ and $m$ are clear from the context. 
For vectors $v\in\mathbb{R}^n$, the Euclidean norm is  $\lVert v \rVert_2 = (v^Tv)^{1/2}$. 
The block-diagonal augmentation of  $A_1,\, A_2,\, \dots,\, A_N$ is denoted by $\mathrm{diag}(A_1,\, A_2,\, \dots,\, A_N)$. Given  $x\in\mathbb{R}^n$, $\mathrm{diag}(x) = \mathrm{diag}(x_1,\, x_2,\, \dots,\, x_n)$, where $x_k \in \mathbb{R}$ is the $k^{th}$ element of $x$.
Given a positive definite matrix $P \in\mathbb{S}^m$, we define the ellipsoid ${\varepsilon}(P)=\{ y \in \mathbb{R}^m~|~ y^T P y \leq 1\}$.
$\{w_i\}_{i=p}^{q}$ denotes the sequence $(w_p,\,w_{p+1},\,\dots,\,w_q)$.
The Hilbert space $\ell_2^n$ consists of square-summable sequences $d = (d_0, \, d_1, \, \dots)$ having a finite $\ell_2^n$-norm,  defined as $\lVert d \rVert_{\ell_2^n}^2  = \sum_{k=0}^{\infty}d_k^Td_k$, where $d_k\in \mathbb{R}^{n}$ for all $k \in \mathbb{Z}_+$. When the dimension is clear from the context, we simply use the denotation $\ell_2$.
For Hilbert spaces $\mathcal{W}$ and $\mathcal{V}$, the $\mathcal{W}$-to-$\mathcal{V}$-induced norm of a bounded  operator $M$ mapping $\mathcal{W}$ to $\mathcal{V}$ is defined as $\lVert M \rVert_{\mathcal{W}\rightarrow \mathcal{V}} = \sup_{0\neq w\in \mathcal{W}} ({\lVert Mw \rVert_{\mathcal{V}}}/{\lVert w \rVert_{\mathcal{W}}})$. When $\mathcal{W}$ is a subset of $\mathcal{V}$, then $\lVert w \rVert_{\mathcal{W}} = \lVert w \rVert_{\mathcal{V}}$. The adjoint of a linear operator $\Pi$ is denoted by $\Pi^\ast$. 
A matrix sequence $P$ is $(h,\,q)$-eventually periodic, for some integers $h \geq 0$ and $q \geq 1$, if $P(h+k+iq) = P(h+k)$ for all $i,\,k \in \mathbb{Z}_+$. An LTV system is $(h,\,q)$-eventually periodic if all its state-space matrix sequences are $(h,\,q)$-eventually periodic. An LFT system $(M,\,\Delta)$ is $(h,\,q)$-eventually periodic if its nominal system $M$ is $(h,\,q)$-eventually periodic.

\subsection{Koopman Operator Theory}

This paper partly deals with generating lifted LPV models that approximate  nonlinear system dynamics over some desired operating envelopes. While this work does not directly focus on the Koopman operator, there are relevant works in the literature that are tied to KOT \cite{williams2015data, brunton2021modern, otto2021koopman, proctor2018generalizing}. For this reason, the key concepts of KOT  are briefly described in this section. Interested readers can refer to the works \cite{mauroy2020koopman,mezic2021koopman} to learn more about the history and state  of KOT in systems and control.

Consider an autonomous, nonlinear dynamical system $x_{k+1} = F(x_k)$, where $x_k \in \mathbb{X} \subseteq \mathbb{R}^n$ represents the state of the system at discrete-time step $k \in \mathbb{Z}_+$ and $F : \mathbb{X} \to \mathbb{X} $ is the nonlinear state transition map. The Koopman operator, $\mathcal{K}$, associated with the nonlinear dynamics governs the temporal evolution of the scalar-valued observable functions $\phi : \mathbb{X} \to \mathbb{R}$ belonging to $\mathbb{F}$, with $\mathbb{F}$ being some Koopman-invariant function space on the state space $\mathbb{X}$. Thus, for any observable $\phi \in \mathcal{F}$, the action of the Koopman operator $\mathcal{K} : \mathbb{F} \to \mathbb{F}$ on $\phi$ is defined as $\mathcal{K}\phi = \phi \circ F$ and at any time-instant $k$, we have $ \mathcal{K}\phi(x_k) =  \phi \left(F(x_k) \right) = \phi \left(x_{k+1}\right)$. The Koopman operator is linear even when the underlying dynamics defined by $F$ are nonlinear; however, the function space $\mathbb{F}$ is infinite dimensional. 
For practical purposes, a subspace $\mathbb{F}_{\mathcal{D}} \subset \mathbb{F}$ is considered, which is spanned by a finite dictionary of observable functions $\mathcal{D}  = \{ \phi_1,\, \phi_2,\, \dots,\, \phi_N \}$, typically with $N \gg n$. Finite dimensional Koopman-invariant subspaces have only been identified for a few dynamical systems \cite{brunton2016koopman}. In practice, data-driven approaches are commonly employed to approximate the infinite-dimensional Koopman operator, $\mathcal{K}$, with a finite-dimensional operator that acts on the subspace $\mathbb{F}_{\mathcal{D}}$. Typically, these methods seek a matrix $K \in \mathbb{R}^{N\times N}$ that approximates the nonlinear dynamics in the lifted observable space, as described by \( \Phi(x_{k+1}) \approx K \Phi(x_k) \), where \( \Phi(x) \coloneqq [\phi_1(x),\, \phi_2(x),\, \dots,\, \phi_N(x) ]^T \) represents the lifting function. 

KOT can also be extended to controlled nonlinear systems $x_{k+1} = F(x_k,\,u_k)$ through various approaches \cite{ korda2018linear, proctor2018generalizing, williams2016extending, iacob2024koopman}. Here, $u_k \in \mathbb{U} \subseteq \mathbb{R}^{n_u}$ represents the input at time $k$.
One such approach considers the controlled nonlinear system as an uncontrolled system evolving on the extended state space defined as the Cartesian product of the original state space and the space of all control sequences.
In this case, the observables $\psi : \mathbb{X} \times \mathbb{U} \to \mathbb{R}$ become functions of both the state and the input. A finite number of these observables can then be combined to construct a lifting function $\Psi(x,\,u) = [\psi_1(x,\,u),\, \psi_2(x,\,u),\, \dots,\, \psi_{\tilde{N}}(x,\,u) ]^T$ such that \( \Psi(x_{k+1},\,u_{k+1}) \approx K \Psi(x_k, \, u_k) \) for some $K \in \mathbb{R}^{{\tilde{N}}\times {\tilde{N}}}$. 
Typically, the lifting function is chosen such that the evolution of the system state in the lifted state space can be decoupled from that of the control input. For example, an LTI approximation \cite{korda2018linear,mamakoukas2019local} of the nonlinear system is obtained by imposing the following structure on the lifting function: $\Psi(x,\,u) = [\Phi(x)^T,\, u^T]^T$, with the function $\Phi$ mapping $\mathbb{R}^n$ to $\mathbb{R}^N$. Defining $\Psi(x,\,u) = [\Phi(x)^T,\, u^T,\, u^{[1]}\Phi(x)^T, \,  \ldots,\,u^{[m]}\Phi(x)^T]^T$ results in a more accurate bilinear approximation \cite{bruder2021advantages, folkestad2021koopman}, where $u^{[i]}$ denotes the $i^{th}$ element of $u$. A generalized LPV approximation \cite{iacob2024koopman, eyuboglu2024data} that encompasses both LTI and bilinear lifted models can be formulated by selecting a lifting function of the form $\Psi(x,\,u) = [\Phi(x)^T,\, u^T,\, \delta^{[1]}u^T, \,  \ldots,\,\delta^{[p]}u^T]^T$, where $\delta^{[i]}$, for $i \in \mathbb{N}_{p}$, are scheduling parameters defined as $\delta = [\delta^{[1]}, \, \ldots, \, \delta^{[p]}]^T = \mu\left( \Phi(x) \right)$, with $\mu : \mathbb{R}^N \to \mathbb{R}^p$ representing the scheduling parameter map. The LTI and bilinear models emerge as special cases when $\mu$ is chosen as the zero map and identity map, respectively.

\section{Lifted LPV Approximations of Nonlinear Systems}  \label{sec_KoopmanModeling}

We consider the following nonlinear dynamical system: 
\begin{equation}
    x_{k+1} = F(x_k,\,u_k,\,v_k), \quad x_0 \in \varepsilon(P),
    \label{eqn_NLeqn}
\end{equation}
where $x_k \in \mathbb{X} \subset \mathbb{R}^n$, $u_k \in \mathbb{U} \subset \mathbb{R}^{n_u}$, and $v_k \in \mathbb{V} \subset \mathbb{R}^{n_v}$ represent the state, control input, and process noise, respectively, with $\mathbb{X}$, $\mathbb{U}$, and $\mathbb{V}$ being bounded sets that define the operating envelope of interest. The system has an uncertain initial condition; $x_0$ lies in the ellipsoid ${\varepsilon}(P) \subseteq \mathbb{X}$, defined by a (known) positive definite matrix $P \in \mathbb{S}^n$. We assume that $(x_e,\,u_e,\,v_e) = (0,\,0,\,0)$ is an equilibrium of \eqref{eqn_NLeqn}, around which the system will be stabilized. This assumption is not limiting, as a non-zero equilibrium can be shifted to the origin through a change of variables.  Additionally, we assume that the system state variables are either directly measurable or estimated via an observer, with measurements and/or estimates corrupted by white Gaussian noise.

Using a data-driven lifting approach, we approximate the nonlinear dynamics \eqref{eqn_NLeqn} with a lifted LPV system of the form
\begin{equation}
z_{k+1} = A z_k + B_0 \tilde{u}_k  + \sum_{i=1}^{p} B_i \delta^{[i]}_k \tilde{u}_k,~~~z_0 = \Phi(x_0) \in \mathbb{I}, 
\label{eqn_Koopmansys}
\end{equation}
where $\tilde{u}_k = [u_k^T, ~ v_k^T]^T$, $z_k \in \mathbb{R}^N$ for $k \in \mathbb{Z}_+\backslash\{0\}$ approximates the lifted state vector $\Phi(x_k)$ in the observable space, $\delta_k = \mu(z_k)$ is the vector of scheduling parameters, and $\mathbb{I}$ denotes the set in which the initial lifted state vector resides. 
The approximate model (\ref{eqn_Koopmansys}) will be used to synthesize robust controllers for the original nonlinear system. 
The closed-loop guarantees established for this approximate model cannot be directly extended to the nonlinear system unless the approximation errors are incorporated into the control design process. Different metrics could be used to assess the significance of the approximation errors, including the loss function adopted in the learning process and the $\ell_2$-gain of the corresponding nonlinear error system, which can be approximately calculated based on a set of diverse inputs and initial conditions (obtained, for instance, using falsification as demonstrated in our previous work \cite{sinha2021lft}) and the associated outputs. IQC analysis, which is used in tuning the controllers in the proposed approach, can handle a wide range of uncertainties and, hence, allows incorporating the approximation errors into the control design process, as long as the discrepancies are modeled in a way amenable to IQC theory. However, this is not the focus of the present work.

For control design and analysis, we impose performance constraints on the original state of the system, which can be reconstructed as $x_k \approx \hat{x}_k = C z_k$, where $C \in \mathbb{R}^{n \times N}$ is a projection matrix. We embed the original state $x$ within the lifted state, i.e., $\Phi(x_k)$ is defined as $[x_k^T ~ \bar{\Phi}(x_k)^T]^T$, where $\bar{\Phi} : \mathbb{R}^n \to \mathbb{R}^{\bar{N}}$ with $\bar{N} = N-n$. Consequently, the solution for $C$ becomes trivial and is given by $C = \begin{bmatrix} I_n & 0 \end{bmatrix}$ \cite{narasingam2019koopman, folkestad2020extended, mamakoukas2019local, ma2019optimal}.
While embedding the original state within the lifted state can restrict the expressiveness of the model, it ensures that the mapping from the lifted state $\Phi(x_k)$ back to the original state $x_k$ is exactly linear. This not only simplifies both the learning and control design processes but also confines the modeling errors solely to the lifted state equation.

\subsection{Characterizing Uncertain Initial Conditions} \label{sec_UIC}

We characterize the set $\mathbb{I}$ in the lifted state space using ellipsoids. Since $z_0 = \Phi(x_0) = [x_0^T ~ \bar{\Phi}(x_0)^T]^T$ and $x_0$ is known to lie in $\varepsilon(P)$, we impose a separate constraint on the component $\bar{\Phi}(x_0)$, specifically constraining it to lie in the ellipsoid $\varepsilon(Q)$, defined by a positive definite matrix $Q \in \mathbb{S}^{\bar{N}}$. This allows for a potentially more accurate characterization of the set $\mathbb{I}$, as discussed in our previous work \cite{farhood2024robustness}. Thus, the set $\mathbb{I}$ is defined as
\begin{equation}
    \mathbb{I} = \{ (\bar{p},\,\bar{q}) \,\mid \, \bar{p} \in \varepsilon(P), ~ \bar{q} \in \varepsilon(Q) \}.
    \label{eqn_UICset}
\end{equation}

Ideally, ${\varepsilon}(Q) \supseteq \{ \bar{\Phi}(x_0) ~|~ x_0 \in {\varepsilon}(P) \}$, although equality is unlikely to hold as $\bar{\Phi}$ is nonlinear. For a given $\bar{\Phi}$, a data-driven approximation of $Q$ can be obtained by \rnum{1}) randomly sampling multiple vectors $\bar{p}_1,\,\dots,\,\bar{p}_{n_{\scriptstyle s}}$ from ${\varepsilon}(P)$; \rnum{2}) generating a set $\mathbb{Q} = \{\bar{q}_1,\,\dots,\,\bar{q}_{n_{\scriptstyle s}} \} \subset \mathbb{R}^{\bar{N}}$, where $\bar{q}_i = \bar{\Phi}(\bar{p}_i)$; and \rnum{3}) finding the minimum volume ellipsoid, centered at the origin, that contains the set $\mathbb{Q}$. The minimum volume ellipsoid can be computed by solving the following convex optimization problem \cite{boyd2004convex}:
\begin{equation}
\label{eqn_minVolell}
\begin{aligned}
\min_{(\bar{Q} \,\in\, \mathbb{S}^{\bar{N}})} \quad &  \log \det \bar{Q} ^{-1}\\
\text{subject to} \quad & \lVert \bar{Q}  \bar{q}_i \rVert_2 \leq 1, \quad i=1,\,\dots,\,n_{\scriptstyle s}. \\
\end{aligned}
\end{equation}
The constraint $\lVert \bar{Q}  \bar{q}\rVert_2 \leq 1$ is equivalent to $\bar{q} \in {\varepsilon}(Q)$, where $Q = \bar{Q}^T\bar{Q} = \bar{Q}^2$. Note that, when $ \{ \bar{\Phi}(x_0) ~|~ x_0 \in {\varepsilon}(P) \} \subset {\varepsilon}(Q) $, we introduce conservatism due to overapproximation. 
This overapproximation is a necessary trade-off to handle the nonlinear equality constraint $z_0 = [x_0^T ~ \bar{\Phi}(x_0)^T]^T$, while ensuring the computational tractability of the control synthesis and analysis problem. For certain lifting functions, such as $\bar{\Phi}(x)=x^2$, the extent of overapproximation can be significant, leading to notable conservatism. To mitigate this, we learn the lifting function $\bar{\Phi}$ and the corresponding ellipsoid ${\varepsilon}(Q)$ simultaneously. During learning, the volume of $\varepsilon(Q)$ is minimized, encouraging the choice of a lifting function that reduces the extent of overapproximation and the associated conservatism.

\subsection{Learning Approach}  \label{sec_learningApproach}

In this section, we present a data-driven methodology designed to simultaneously learn the lifted LPV model and characterize the set ${\varepsilon}(Q)$. We assume that multiple trajectories of \eqref{eqn_NLeqn}, each denoted as $\left(\{x_i\}_{i=0}^{N_s},\, \{u_i\}_{i=0}^{N_s-1}, \, \allowbreak \{v_i\}_{i=0}^{N_s-1} \right)$, are available within the envelope of interest and that they densely cover the envelope. The trajectories can be generated either through experiments or by simulating the (potentially unstable) nonlinear system in a closed-loop setting with a stabilizing (but not necessarily optimal) controller or in an open-loop setting with carefully designed inputs to ensure that the operating envelope of interest $(\mathbb{X},\, \mathbb{U},\, \mathbb{V})$ is not violated. For learning purposes, finite-horizon snapshots of each trajectory are considered since the approximate LPV model is unlikely to predict accurately the trajectories of a nonlinear system over a large horizon in an open-loop setting, especially when dealing with unstable systems. These snapshots, denoted $\left(\{x_{k+i}\}_{i=0}^{T},\, \{u_{k+i}\}_{i=0}^{T-1},\, \{v_{k+i}\}_{i=0}^{T-1} \right)$ for $k \in \mathbb{Z}_{N_s - T} $, form our training and validation datasets. 

We parameterize the function $\bar{\Phi}$ and the scheduling parameter map $\mu$ using multi-layer feed-forward neural networks. The temporal evolution in the lifted state space is modeled using a recurrent neural network (RNN). Our core RNN architecture is shown in Figure~\ref{fig_NNKoopman}, and is based on the ones given in \cite{otto2019linearly, iacob2021deep}. Given a trajectory from the training dataset, the initial lifted state is obtained using the lifting function as $z_k =\Phi(x_k)$. Over the prediction horizon, the lifted state $z_{k+i|k}$ is computed by recursively applying the LPV state equation \eqref{eqn_Koopmansys} with $z_k$ as the initial state value and $(\tilde{u}_{k},\,\tilde{u}_{k+1},\,\dots,\,\tilde{u}_{k+i-1})$ as the input. 
To simultaneously learn the maps $\bar{\Phi}$ and $\mu$ as well as the matrices $A,\, B_0, \, B_1,\dots, \, B_p, \, Q$, we formulate the following loss function, $\mathcal{L}$, which is minimized over all the trajectory snapshots in the training dataset:
\begin{equation}
    \mathcal{L} =  \mathcal{L}_{dyn} + \beta_1 \mathcal{L}_{ell} + \beta_2 \mathcal{L}_{vol}, \label{eqn_lossFunc}
\end{equation}
where the non-negative weights $\beta_1$ and $\beta_2$ are learning hyperparameters, along with the prediction horizon length $T$, and the loss functions $\mathcal{L}_{dyn}$, $\mathcal{L}_{ell}$, and $\mathcal{L}_{vol}$ are explained next.

\begin{figure}[t]
\centering
\includegraphics[width=0.6\textwidth]{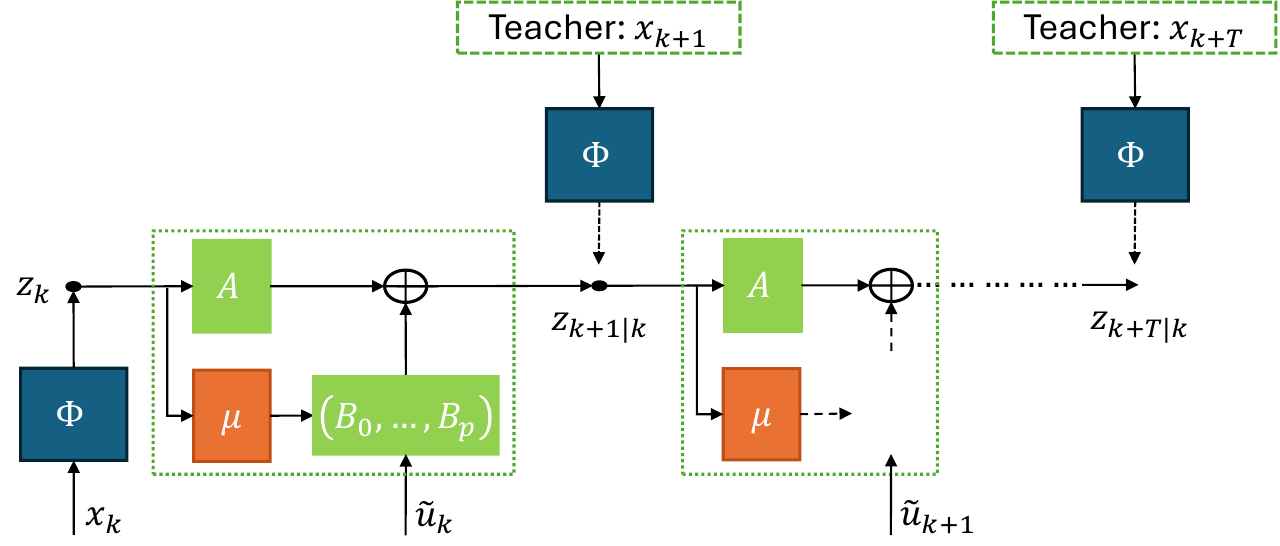} 
\caption{Lifted LPV model incorporated in an RNN architecture for learning.}
\label{fig_NNKoopman}
\end{figure}

To enforce that the nonlinear dynamics in the lifted state space evolve according to \eqref{eqn_Koopmansys}, we include the following term in the loss function:
\[ \mathcal{L}_{dyn} =  \frac{1}{T} \sum_{i=1}^{T} \rho^{i-1}  \lVert \Phi(x_{k+i}) - {z}_{k+i|k}\lVert_{\text{MSE}}, \]
where MSE refers to mean squared error and $\rho \in (0,\,1]$ is a decaying weight used to prioritize short-term predictions. 
The ellipsoid ${\varepsilon}(Q)$ must encompass the finite set $\mathbb{Q} =\left\{\bar{\Phi}(\bar{p}) ~|~ \bar{p} \in {\varepsilon}(P) \cap \{x_k,\,\dots,\,x_{k+T} \} \right\}$ corresponding to each trajectory in the training dataset. To simultaneously learn the ellipsoid, we need to incorporate the convex constraints outlined in \eqref{eqn_minVolell} into the learning problem. However, solving a constrained learning problem poses challenges, often necessitating iterative solutions and thereby substantially increasing the computational complexity \cite{pauli2021training}. To circumvent this issue, we relax the convex constraints and incorporate them directly into the loss function as follows:
\begin{equation*}
        \mathcal{L}_{ell} = \frac{1}{T+1}\sum_{i=0}^{T} \big(  \max\{0, \, 1 + \kappa - x_{k+i}^TPx_{k+i}\} \max\{0, \, \bar{\Phi}(x_{k+i})^T Q \bar{\Phi}(x_{k+i}) - 1\}  \big),
\end{equation*}
where $\kappa$ is a small positive constant. This loss function is added to minimize the weighted ``distance" of the elements of $\mathbb{Q}$ from ${\varepsilon}(Q)$. 

We have one more constraint, namely, $Q \succ 0$. In \cite{revay2020lipschitz, pauli2023lipschitz, wang2023direct, verhoek2023learning}, direct parametrizations for matrices are proposed to guarantee the satisfaction of specific LMIs. Similar to \cite{verhoek2023learning}, we use the spectral decomposition of a real symmetric matrix to parameterize the matrix $Q$ as
\begin{equation}
Q = V\Lambda V^T, ~\text{where}~ \Lambda = \mathrm{diag}\left(\exp{(\bar{d}_1)},\dots,\exp{(\bar{d}_{\bar{N}})}\right)~\text{and}~ V = \text{Cayley}(\mathtt{V}).
    \label{eqn_pdQ}
\end{equation}
Here, the vector $\bar{d} = (\bar{d}_1,\dots,\bar{d}_{\bar{N}}) \in \mathbb{R}^{\bar{N}}$ and the skew-symmetric matrix $\mathtt{V} = -\mathtt{V}^T \in \mathbb{R}^{\bar{N} \times \bar{N}}$ serve as free variables. The skew-symmetric matrix $\mathtt{V}$ can alternatively be parameterized as $\mathtt{U} - \mathtt{U}^T$, with $\mathtt{U} \in \mathbb{R}^{\bar{N} \times \bar{N}}$ as the free variable. The Cayley transform of the skew-symmetric matrix $\mathtt{V}$, defined as $\text{Cayley}(\mathtt{V}) = (I - \mathtt{V})(I + \mathtt{V})^{-1}$, is employed to parameterize the orthogonal matrix $V$ \cite{trockman2020orthogonalizing}. As $V$ is orthogonal, the eigenvalues of $Q$ correspond to the diagonal elements of $\Lambda$, all of which are positive, thereby establishing the positive definiteness of $Q$. To minimize the volume of the ellipsoid ${\varepsilon}(Q)$, we incorporate the following into the loss function:
\[ \mathcal{L}_{vol} = {\exp \left(- \frac{1}{2} \sum_{i=1}^{\bar{N}} \bar{d}_i \right)}.\]
The volume of the ellipsoid ${\varepsilon}(Q)$ is proportional to $\sqrt{ \det{Q^{-1}} }$,  which is equivalent to $\mathcal{L}_{vol}$ when $Q$ is parameterized using \eqref{eqn_pdQ}. This relationship is evident by recognizing that the determinant of $Q^{-1} = V\Lambda^{-1}V^T$ is equal to the product of its eigenvalues $\exp{(-\bar{d}_1)},\dots,\exp{(-\bar{d}_{\bar{N}})}$.
\vskip 0.1in
\begin{prop}
If $\mathcal{L}_{ell} = 0$, then $ \mathbb{Q} \subseteq {\varepsilon}(Q)$ is guaranteed.
\end{prop}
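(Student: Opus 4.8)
The plan is to exploit the structure of $\mathcal{L}_{ell}$ as a scaled sum of non-negative terms, so that its vanishing forces every summand to vanish, and then to use the strict positivity of the margin $\kappa$ to "activate" the ellipsoid-membership factor. First I would fix an arbitrary trajectory snapshot and an index $i \in \mathbb{Z}_T$ and observe that the corresponding summand, $\max\{0,\,1+\kappa-x_{k+i}^TPx_{k+i}\}\,\max\{0,\,\bar{\Phi}(x_{k+i})^TQ\bar{\Phi}(x_{k+i})-1\}$, is a product of two non-negative real numbers. Hence $\mathcal{L}_{ell}=0$ (equivalently, zero contribution from each snapshot, since each contribution is itself non-negative) implies that each such product equals $0$.

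Next I would take an arbitrary element $\bar{q}\in\mathbb{Q}$. By the definition of $\mathbb{Q}$, we have $\bar{q}=\bar{\Phi}(x_{k+i})$ for some index $i$ with $x_{k+i}\in\varepsilon(P)$, i.e., $x_{k+i}^TPx_{k+i}\le 1$. Therefore $1+\kappa-x_{k+i}^TPx_{k+i}\ge\kappa>0$, so the first factor of the corresponding product equals $1+\kappa-x_{k+i}^TPx_{k+i}$ and is strictly positive. Since the whole product is zero, the second factor must vanish: $\max\{0,\,\bar{\Phi}(x_{k+i})^TQ\bar{\Phi}(x_{k+i})-1\}=0$, which gives $\bar{q}^TQ\bar{q}\le 1$, i.e., $\bar{q}\in\varepsilon(Q)$. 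As $\bar{q}\in\mathbb{Q}$ was arbitrary, this yields $\mathbb{Q}\subseteq\varepsilon(Q)$.

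There is no substantive obstacle here; the one point that warrants care — and that I would highlight — is the role of the strict margin $\kappa>0$. Had the penalty been built from $\max\{0,\,1-x_{k+i}^TPx_{k+i}\}$ instead, then for any $x_{k+i}$ lying on the boundary of $\varepsilon(P)$ the first factor would be exactly $0$, so the product would vanish regardless of whether $\bar{\Phi}(x_{k+i})\in\varepsilon(Q)$, and the implication would fail on precisely the boundary points of $\varepsilon(P)$. Choosing $\kappa>0$ keeps the first factor bounded below by $\kappa$ on all of $\varepsilon(P)$, so it can be safely divided out and the ellipsoid-membership factor is forced to zero. (The constant $\kappa$ can be taken arbitrarily small, since it only needs to be positive for this argument.)
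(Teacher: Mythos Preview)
Your proof is correct and follows essentially the same approach as the paper: both arguments use non-negativity of the summands to conclude that each product vanishes, then use the strict margin $\kappa>0$ to ensure that the first factor is positive whenever $x_{k+i}\in\varepsilon(P)$, forcing the second factor to zero. The paper phrases this via an auxiliary expanded ellipsoid $\tilde{\varepsilon}(P)=\{r\mid r^TPr\le 1+\kappa\}$ and a chain of logical implications, whereas you work directly with the inequalities, but the content is the same; your added remark explaining why $\kappa>0$ is essential (to handle boundary points of $\varepsilon(P)$) is a nice clarification.
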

\begin{proof}
The summand in $\mathcal{L}_{ell}$ is non-zero if and only if $x_{k+i} \in \tilde{{\varepsilon}}(P) \wedge \bar{\Phi}(x_{k+i}) \notin {{\varepsilon}}(Q)$, where $\tilde{{\varepsilon}}(P) = \{r ~|~ r^TPr \leq 1 + \kappa \}$ is an expansion of ${\varepsilon}(P)$ by a margin $\kappa > 0$. Therefore, if $\mathcal{L}_{ell} = 0$, $\neg \left( x_{k+i} \in \tilde{{\varepsilon}}(P) \wedge \bar{\Phi}(x_{k+i}) \notin {{\varepsilon}}(Q) \right)$ holds for all $i \in \mathbb{Z}_T$. This is equivalent to $ x_{k+i} \in \tilde{{\varepsilon}}(P) \rightarrow \bar{\Phi}(x_{k+i}) \in {{\varepsilon}}(Q) $ for all $i \in  \mathbb{Z}_T$. Additionally, $ x_{k+i} \in {{\varepsilon}}(P) \rightarrow x_{k+i} \in \tilde{{\varepsilon}}(P)$ for all $ i \in \mathbb{Z}_T$, as ${{\varepsilon}}(P) \subset \tilde{{\varepsilon}}(P)$. By the transitivity of implication, we get $ x_{k+i} \in {{\varepsilon}}(P) \rightarrow \bar{\Phi}(x_{k+i}) \in {{\varepsilon}}(Q)$ for all $i \in \mathbb{Z}_T$. Thus, all elements of the set $\mathbb{Q}$ belong to $ {\varepsilon}(Q)$, ensuring $ \mathbb{Q} \subseteq {\varepsilon}(Q)$.
\end{proof}

The learning hyperparameters $\beta_1$ and $\beta_2$ can be tuned to ensure that $\mathcal{L}_{ell} = 0$ holds. In fact, when $\beta_2=0$, the training process can trivially achieve $\mathcal{L}_{ell} = 0$ by selecting an excessively large ellipsoid $\varepsilon(Q)$. While this choice guarantees constraint satisfaction, it introduces significant conservatism due to overapproximation. Consequently, one must balance $\beta_1$ and  $\beta_2$ by performing a trade-off study between the size of the learned ellipsoid and the extent of possible initial state values it covers. A smaller value of $\beta_2$ places less emphasis on minimizing the size of the ellipsoid, making it easier for the training process to find an ellipsoid that contains all initial state samples from the learning dataset. On the other hand, a larger $\beta_2$ results in tighter ellipsoids but may lead to constraint violations or degraded model accuracy.

\subsection{Refining the Learned Ellipsoid} 

The learning approach does not guarantee convergence to an ellipsoid that fully encompasses the set of possible initial state values derived from the training and validation datasets, and even if it does, the resulting ellipsoid may not be of minimum volume. Therefore, after learning, the matrix $Q$ is recomputed by solving problem \eqref{eqn_minVolell} using all possible initial state values derived from the learning dataset. 
This data-driven estimation of $Q$ may still be inaccurate if the nonlinear system trajectories used for learning do not densely cover the set $\varepsilon(P)$. Although we can generate trajectories by appropriately sampling initial states from $\varepsilon(P)$, the coverage may still be insufficient due to the limited number of trajectories. Hence, to improve this data-driven estimation, we employ an iterative method, where we search for a counterexample $x_c$ such that $ x_c \in {\varepsilon}(P) \wedge  \bar{\Phi}(x_c) \notin {\varepsilon}(Q) $ holds true. If a counterexample is found, $\bar{\Phi}(x_c)$ is added to the dataset, and problem \eqref{eqn_minVolell} is solved again. This iterative process continues until no counterexample can be found. If no counterexample exists, we can conclude that the ellipsoid ${\varepsilon}(Q)$ accurately captures the set of initial state values. However, finding a counterexample or proving the nonexistence of one is challenging, as the formula to be satisfied is nonlinear. While satisfiability modulo theories  (SMT) solvers such as \texttt{dReal} \cite{gao2013dreal} can be applied to this problem, they are typically not scalable. 
In this work, we adopt a heuristic approach by solving a constrained nonlinear optimization problem using MATLAB’s \texttt{fmincon}, where the constraint is $x_0 \in \varepsilon(P)$ and the objective is to maximize the utility function $\bar{\Phi}(x_0)^T Q \bar{\Phi}(x_0)$. A solution $x_0$ is considered a counterexample if the optimal cost is greater than one.

\section{Robust Control Design} \label{sec_controldesign}

In this section, we present a synthesis approach for designing static full-state feedback NSLPV controllers for the LPV system described in \eqref{eqn_Koopmansys}. Without loss of generality, we assume that the scheduling parameters satisfy $\rvert \delta^{[i]}_k \lvert \leq 1$ for all $k \in \mathbb{Z}_+$  and $i \in \mathbb{N}_p$. The upper and lower bounds for each scheduling parameter can be estimated using samples of the original state $x$ from the learning dataset, along with the learned functions $\Phi$ and $\mu$. Given these bounds, we can introduce new ``normalized" parameters satisfying the unit bound constraints in the LPV system, and adjust the relevant matrices accordingly to account for this normalization. Under this assumption, let $\bm{\delta}$ represent the set of all permissible trajectories of the scheduling parameter $\delta$. Then, the synthesis goal is to design a controller such that the closed-loop LPV system is robustly stable and, for some $\gamma > 0$,
\begin{equation}
    \sup \, \{\lVert e \rVert_{\ell_2} \,\mid\, \lVert d \rVert_{\ell_2} \leq 1,~ \delta \in \bm{\delta},~ z_0 \in \mathbb{I} \}< \gamma
    \label{eqn_perfinequality}
\end{equation}
holds, where $e$ represents the performance output, $d \in \ell_2$ denotes the exogenous disturbance consisting of process noise  and measurement noise, and the set $\mathbb{I}$ is defined as in \eqref{eqn_UICset}. The assumption that the $\ell_2$-norm of $d$ is bounded by one is made without loss of generality since any non-unit bound can be absorbed into the system matrices.

For control design, we define the performance output as 
\begin{equation}
    e = \mathrm{diag}(c)[\hat{x}^T ~ u^T]^T = C_e z + D_e u, 
    \label{eqn_perfOut}
\end{equation}
where $c \in \mathbb{R}^{n_e}$, with $n_e = n+n_u$, is a vector of penalty weights which must be chosen judiciously by the designer to optimize some performance measure. Since $\hat{x} = Cz$, we get $C_e = \mathrm{diag}(c) [ C^T ~ 0_{N\times n_u} ]^T$ and $D_e = \mathrm{diag}(c) [ 0_{n_u\times n} ~ I_{n_u} ]^T$. The full-state feedback controller will be implemented with $y_k = \Phi(\tilde{x}_k)$ as the measurement output, where $\tilde{x}_k$ represents the true state ${x}_k$ corrupted by noise. Then, we can approximate the measurement output as 
\begin{equation}
    y_k \approx \Phi(x_k) + W w_k \approx z_k + W w_k = \hat{y}_k, 
    \label{eqn_measOut}
\end{equation}
where $W \in \mathbb{R}^{N \times n_w}$ is a weighting matrix derived from the standard deviations of the measurement noise in the lifted state space, with $w$ representing unit-variance white Gaussian noise. Given the lifting function $\Phi$ and the characteristics of the measurement noise in the original state space, the noise characteristics in the lifted state space can be estimated using data.

\begin{figure}[t]
\centering
\includegraphics[width=0.65\textwidth]{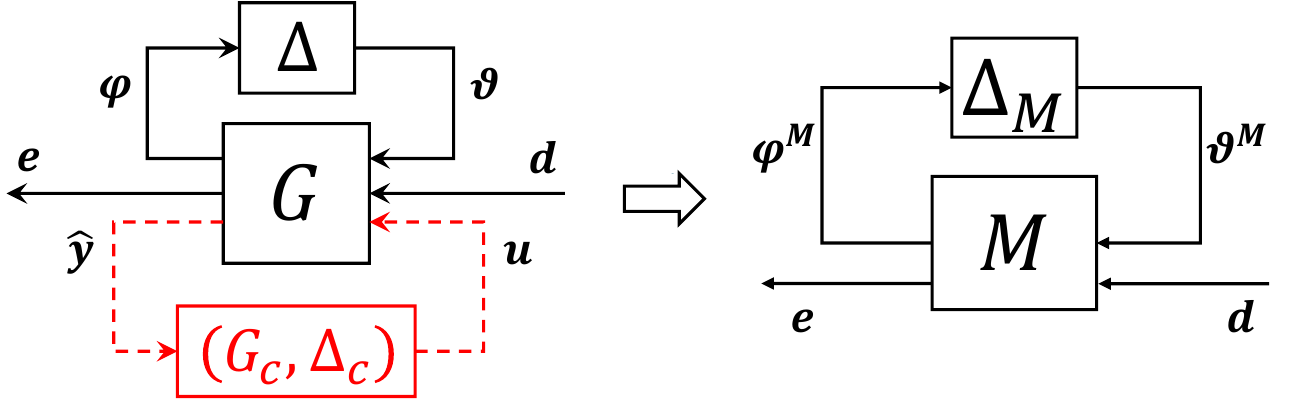} 
\caption{LFT representation of the LPV model.}
\label{fig_LFTKoopman}
\end{figure}

The LPV system can be equivalently represented by the interconnection, $(G,\,{\Delta})$, of a nominal LTI system $G$ and a perturbation operator $\Delta$, as shown in Figure~\ref{fig_LFTKoopman}, where 
$$\Delta(k) = \mathrm{diag}(\delta_k^{[1]} I_{m_1}, \, \delta_k^{[2]} I_{m_2}, \, \dots, \, \delta_k^{[p]} I_{m_p})$$
and the nominal system $G$ is defined as follows: 
\begin{equation}
    \begin{bmatrix} z_{k+1} \\ \varphi_k \\ e_k \\ \hat{y}_k \end{bmatrix} = 
    \begin{bmatrix} A_{ss} & A_{sp} & B_{1s} & B_{2s}\\ A_{ps} & A_{pp} & B_{1p} & B_{2p}\\ C_{1s} & C_{1p} & D_{11} & D_{12} \\ C_{2s} & C_{2p} & D_{21} & 0 \end{bmatrix} 
    \begin{bmatrix} z_k \\ \vartheta_k \\ d_k \\ u_k  \end{bmatrix}, \quad \vartheta_k = \Delta(k)\varphi_k, \quad z_0 \in \mathbb{I}.
    \label{eqn_LFTkoop}
\end{equation}
Here, $d_k = [w_k^T, ~ v_k^T]^T \in \mathbb{R}^{n_d}$, with $n_d=n_w+n_v$, and $\vartheta_k,\, \varphi_k \in \mathbb{R}^m$, where $m = \sum_{i=1}^{p}m_i$ and $m_i = n_u + n_v$ for all $i \in \mathbb{N}_p$. The state-space matrices of $G$ are defined as follows:
\begin{equation}
    \begin{split}
    A_{ss} &= A,\quad A_{sp} =  \begin{bmatrix}{B}_1 & {B}_2 & \dots & {B}_p\end{bmatrix},\quad 
    B_{1s} = \begin{bmatrix}0_{N \times n_w} & B_{02}\end{bmatrix}, \quad B_{1p} = \bm{1}_p \otimes \mathrm{diag}(0_{n_u \times n_w} ,\, I_{n_v}), \\
    B_{2s} &= B_{01}, \quad B_{2p} = \bm{1}_p \otimes \begin{bmatrix}I_{n_u} & 0_{{n_u} \times n_v}\end{bmatrix}^T,\quad  C_{1s} = C_e,\quad  D_{12} = D_e,\quad  C_{2s} = I,\quad  D_{21} = \begin{bmatrix}W & 0_{N \times n_v}\end{bmatrix},
    \end{split}
    \label{eqn_LFTssmatrices}
\end{equation}
with $\otimes$ denoting the Kronecker product and all other matrices being zero. The matrices $B_{01}$ and $B_{02}$ are such that $B_0 = \begin{bmatrix}B_{01} & B_{02}\end{bmatrix}$.
For $\delta \in \bm{\delta}$, let $\bm{\Delta}$ be the set of all possible perturbation operators $\Delta$. Then, we define $(G,\,\bm{\Delta}) = \{(G,\,{\Delta}) ~|~ \Delta \in \bm{\Delta} \}$ as an uncertain LFT system that encapsulates the behavior of the LPV system.

\subsection{Controller Synthesis Results}

\begin{lemma} \label{lemma1}
    Given matrices $X \in \mathbb{R}^{\bar{n} \times \bar{m}}$ and $Y \in \mathbb{R}^{\bar{m} \times \bar{m}}$, the following statements are equivalent:
    \begin{enumerate}[(i)]
        \item There exist positive definite matrices $R, S \in \mathbb{S}^{\bar{n}}$ such that
            \begin{equation}
                \begin{bmatrix}  R & I \\ I & S \end{bmatrix} \succeq 0, \quad X^T S X \prec Y; \label{eqn_lemma1a}
            \end{equation}
        \item There exists a positive definite matrix $R \in \mathbb{S}^{\bar{n}}$ such that
            \begin{equation}
                \begin{bmatrix} Y & X^T \\ X & R     \end{bmatrix} \succ 0. \label{eqn_lemma1b}
            \end{equation}
    \end{enumerate}
\end{lemma}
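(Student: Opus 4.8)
The natural line of attack is to show both implications using the Schur complement, treating the two positive-definiteness conditions as Schur-complement reformulations of one another, with the matrix $S$ playing the role of the ``missing'' pivot that connects them. The crucial observation is that \eqref{eqn_lemma1b}, being a $2\times2$ block matrix with lower-right block $R \succ 0$, is equivalent (by the Schur complement with respect to $R$) to $Y - X^T R^{-1} X \succ 0$, i.e.\ $X^T R^{-1} X \prec Y$. So the whole lemma reduces to the statement: there exist $R, S \succ 0$ with $\left[\begin{smallmatrix} R & I \\ I & S \end{smallmatrix}\right] \succeq 0$ and $X^T S X \prec Y$ if and only if there exists $R \succ 0$ with $X^T R^{-1} X \prec Y$. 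The block inequality $\left[\begin{smallmatrix} R & I \\ I & S \end{smallmatrix}\right] \succeq 0$ with $R \succ 0$ is itself equivalent (Schur complement again) to $S \succeq R^{-1}$ (equivalently $S - R^{-1} \succeq 0$).

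First I would prove (ii)$\Rightarrow$(i). Given $R \succ 0$ with $X^T R^{-1} X \prec Y$, simply set $S = R^{-1}$. Then $S \succ 0$, the block matrix $\left[\begin{smallmatrix} R & I \\ I & S \end{smallmatrix}\right] = \left[\begin{smallmatrix} R & I \\ I & R^{-1} \end{smallmatrix}\right] \succeq 0$ holds since its Schur complement $S - R^{-1} = 0 \succeq 0$, and $X^T S X = X^T R^{-1} X \prec Y$. So (i) holds with this choice of $(R,S)$. (Equivalently, one gets \eqref{eqn_lemma1b} directly from (ii) by the Schur complement with respect to $R$, without even passing through $S$.)

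Next, (i)$\Rightarrow$(ii). Suppose $R, S \succ 0$ satisfy \eqref{eqn_lemma1a}. From $\left[\begin{smallmatrix} R & I \\ I & S \end{smallmatrix}\right] \succeq 0$ and $R \succ 0$, the Schur complement gives $S \succeq R^{-1}$, hence $X^T R^{-1} X \preceq X^T S X \prec Y$; thus $Y - X^T R^{-1} X \succ 0$, and reversing the Schur complement with respect to $R \succ 0$ yields exactly \eqref{eqn_lemma1b} with this same $R$. The only genuinely substantive points are the two Schur-complement equivalences and the monotonicity step $S \succeq R^{-1} \Rightarrow X^T S X \succeq X^T R^{-1} X$, all of which are standard; the main thing to be careful about is the strict-versus-nonstrict bookkeeping (the block matrix condition is only $\succeq 0$, so one cannot directly conclude $S \succ R^{-1}$, but the strictness of $X^T S X \prec Y$ is what rescues the strict inequality in \eqref{eqn_lemma1b}). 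I do not anticipate a real obstacle here; the lemma is a clean Schur-complement identity, and the proof is short.
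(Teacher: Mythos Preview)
Your proposal is correct and follows essentially the same approach as the paper's own proof: both directions are handled via Schur complements, using $S \succeq R^{-1}$ from the block inequality for (i)$\Rightarrow$(ii) and the choice $S = R^{-1}$ for (ii)$\Rightarrow$(i). Your remark on the strict-versus-nonstrict bookkeeping is a nice addition but does not change the underlying argument.
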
 \vskip 0.05in
\begin{proof}
    Applying the Schur complement formula to the first inequality in \eqref{eqn_lemma1a}, we get $R^{-1} \preceq S$, which, together with the second inequality in \eqref{eqn_lemma1a}, leads to $ X^T R^{-1} X \prec Y$. Applying the Schur complement formula to the preceding inequality gives \eqref{eqn_lemma1b}. This proves (i) implies (ii). The converse direction, (ii) implies (i), can be shown by simply choosing $S = R^{-1}$ and applying Schur complement formula to \eqref{eqn_lemma1b}.
\end{proof} \vskip 0.1in

\begin{thm} \label{col1}
    Consider an LPV system with an LFT representation defined as in \eqref{eqn_LFTkoop}. Suppose that the state variables are exactly measurable, i.e., $C_{2s} = I$, $C_{2p} = 0$, and $D_{21} = 0$, then a static, state-feedback, $(\bar{h},\,1)$-eventually periodic NSLPV synthesis, with $\bar{h} \geq 0$, ensuring the validity of the performance inequality in \eqref{eqn_perfinequality} exists if there exist positive definite matrices $R_0(k) \in \mathbb{S}^{N}$, $R_i(k) \in \mathbb{S}^{m_i}$,  $S_i(k) \in \mathbb{S}^{m_i}$  for $i \in \mathbb{N}_p$ and $k = 0,\,1,\,\dots,\,\bar{h}$, and positive scalars $b$, $f_{11}$, $f_{12}$, $f_{2}$, $g$, and $t$ such that
    \begin{align}
        & b + f_{11} + f_{12} + f_2 < 2\gamma, \quad  \begin{bmatrix} F_1 & \Gamma^T \\ \Gamma & R_0(0) \end{bmatrix} \succ 0,\label{eqn_LPVSFsynthesis1} \\[4pt]
        & N_R^T \left\{ H \begin{bmatrix} R_0(k) & 0 & 0 \\ 0 & \bar{R}(k) & 0 \\ 0 & 0 & g I \end{bmatrix} H^T -  
            \begin{bmatrix} R_0(k+1) & 0 & 0 \\ 0 & \bar{R}(k) & 0 \\ 0 & 0 & b I \end{bmatrix} \right\} N_R \prec 0,\label{eqn_LPVSFsynthesis2}  \\[4pt]
        & \begin{bmatrix}
            \begin{bmatrix} \bar{S}(k) & 0 \\ 0 & f_2 I \end{bmatrix} - H_{22}^T\begin{bmatrix} \bar{S}(k) & 0 \\ 0 & tI \end{bmatrix}H_{22} & H_{12}^T \\ H_{12} & R_0(k+1)
        \end{bmatrix} \succ 0, \label{eqn_LPVSFsynthesis3}  \\[4pt]
        & \begin{bmatrix} R_i(k) & I \\ I & S_i(k) \end{bmatrix} \succeq 0, \quad  \begin{bmatrix} g & 1 \\ 1 & f_2 \end{bmatrix} \succeq 0, \quad  \begin{bmatrix} t & 1 \\ 1 & b \end{bmatrix} \succeq 0, \label{eqn_LPVSFsynthesis4} 
    \end{align}
    for $i = 1,\dots,p$ and $k = 0,1,\dots,\bar{h}$, where $R_0(\bar{h}+1) = R_0(\bar{h})$, $\Gamma = \mathrm{diag}(P^{-1/2}, \, Q^{-1/2})$, $F_1 = \mathrm{diag}(f_{11} I_n, \, f_{12}I_{\bar{N}})$,
    \begin{equation*}
    \begin{split}
        \bar{R}(k) &= \mathrm{diag}(R_1(k),\, R_2(k),\, \dots,\, R_p(k)),  \\[2pt]
        \bar{S}(k) &= \mathrm{diag}(S_1(k),\, S_2(k),\, \dots,\, S_p(k)),  \\[2pt]
        \mathrm{Im}\, N_R &= \mathrm{Ker} \begin{bmatrix} B_{2s}^T & B_{2p}^T & D_{12}^T \end{bmatrix}, ~~  N_R^TN_R = I, 
    \end{split} \qquad
        H = \begin{bmatrix}
            H_{11} & H_{12} \\ H_{21} & H_{22} \end{bmatrix} =
            \left[ \begin{array}{c|cc}
            A_{ss} & A_{sp} & B_{1s} \\ \midrule A_{ps} & A_{pp} & B_{1p} \\ C_{1s} & C_{1p} & D_{11} 
        \end{array} \right],
    \end{equation*} 
    with $\mathrm{Im}\, Z$ and $\mathrm{Ker}\,Z$ denoting the image and kernel of a matrix $Z$, respectively.
\end{thm}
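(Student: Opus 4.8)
The plan is to derive the synthesis conditions by following the standard dualization-and-elimination route for full-state feedback control, adapted to the NSLPV/LFT setting with uncertain initial conditions. The starting point is a dissipation-type (bounded-real) inequality for the closed-loop LFT system $(G_{cl},\Delta)$ obtained by substituting a static gain $u_k = K(k) z_k$ into \eqref{eqn_LFTkoop}. Because the performance measure in \eqref{eqn_perfinequality} bounds $\lVert e\rVert_{\ell_2}$ over all $\lVert d\rVert_{\ell_2}\le 1$, $\delta\in\bm\delta$, and $z_0\in\mathbb I$, the natural certificate is a time-varying storage matrix $X(k)\succ 0$ together with a scaling (multiplier) $\mathrm{diag}(\bar S(k),\bar R(k))$ for the repeated-parameter block $\Delta(k)$, satisfying a matrix inequality of the form
\[
\begin{bmatrix} A_{cl} & B_{cl}\end{bmatrix}^T \Theta(k+1) \begin{bmatrix} A_{cl} & B_{cl}\end{bmatrix} - \mathrm{diag}\!\big(X(k),\,\bar R(k),\,bI\big) + (\text{performance/IQC terms}) \prec 0,
\]
where $\Theta(k+1)=\mathrm{diag}(X(k+1),\bar S(k), \cdot)$ and the scalars $b,g,t,f_{11},f_{12},f_2$ enter through the signal IQC bookkeeping for $d$ and the cost decomposition. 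The uncertain-initial-condition part is handled exactly as in \cite{farhood2021lpv,farhood2024robustness}: the constraint $z_0\in\mathbb I$ with $\mathbb I$ as in \eqref{eqn_UICset} translates, after dualization, into the LMI $\begin{bmatrix}F_1 & \Gamma^T\\ \Gamma & R_0(0)\end{bmatrix}\succ 0$ with $\Gamma=\mathrm{diag}(P^{-1/2},Q^{-1/2})$, which says $R_0(0)^{-1}$ (the storage at time $0$, in the controller-synthesis coordinates) dominates the shape matrix of the ellipsoid $\mathbb I$ scaled by $f_{11},f_{12}$; the budget inequality $b+f_{11}+f_{12}+f_2<2\gamma$ collects the contributions of the disturbance energy, the two initial-condition blocks, and the terminal term into the performance level $\gamma$.

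Next I would perform the change of variables that convexifies the problem: write the synthesis inequality in terms of $R_0(k):=X(k)^{-1}$ (and analogously dualize the multiplier, introducing the pair $(R_i(k),S_i(k))$ with $\begin{bmatrix}R_i & I\\ I& S_i\end{bmatrix}\succeq 0$, which is the dualization identity for the structured scaling — this is precisely Lemma~\ref{lemma1} applied blockwise), then eliminate the controller gain $K(k)$. Elimination is done via the Projection Lemma (Finsler): the gain appears affinely, so the solvability of the closed-loop inequality over $K(k)$ is equivalent to two inequalities obtained by projecting onto the kernels of the two relevant coefficient matrices. One projection kills the column space through which $K(k)$ acts on the "plant-side" rows, giving \eqref{eqn_LPVSFsynthesis2} with the annihilator $N_R$ defined by $\mathrm{Im}\,N_R=\mathrm{Ker}[B_{2s}^T\ B_{2p}^T\ D_{12}^T]$; the other projection — here simplified because the state is exactly measured, so $C_{2s}=I$, $C_{2p}=0$, $D_{21}=0$ and the "observer-side" kernel is trivial — collapses to the Schur-complement form \eqref{eqn_LPVSFsynthesis3}, where the only nontrivial blocks are $H_{12},H_{22}$ and the storage $R_0(k+1)$ reappears as the Schur term. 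The matrix $H$ is exactly the LTI part of \eqref{eqn_LFTkoop} with the $\hat y$ (measurement) row and $u$ (input) column removed, as defined in the theorem statement. The scalar $2\times 2$ LMIs in \eqref{eqn_LPVSFsynthesis4} involving $(g,f_2)$ and $(t,b)$ are the Schur complements enforcing consistency between the primal disturbance-IQC scalings ($g$ on the "$\varphi$-out" side, $t$ on the "$\vartheta$-in" side) and their dual counterparts ($f_2$, $b$).

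For eventual periodicity: since we seek an $(\bar h,1)$-eventually periodic controller, all decision variables are indexed by $k=0,\dots,\bar h$ with the stationarity convention $R_0(\bar h+1)=R_0(\bar h)$; one checks that a solution on this finite index set extends to an admissible infinite-horizon certificate, using that the plant matrices of $G$ are time-invariant so the inequality at step $\bar h$ with the frozen terminal storage is exactly the stationary LPV bounded-real inequality — this is the argument already established in \cite{farhood2008controlNSLPV,farhood2021lpv} and I would invoke it rather than reprove it. Robust stability follows from strictness of \eqref{eqn_LPVSFsynthesis2}--\eqref{eqn_LPVSFsynthesis3} together with $R_0(k)\succ 0$ in the usual Lyapunov manner.

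The main obstacle I anticipate is the elimination step combined with the dualization of the structured multiplier: one must be careful that replacing the primal storage/scaling by their inverses and simultaneously eliminating $K(k)$ preserves equivalence (not just sufficiency) of the bounded-real condition — in particular that the two Projection-Lemma inequalities, after the change of variables, are exactly \eqref{eqn_LPVSFsynthesis2} and \eqref{eqn_LPVSFsynthesis3} with no residual coupling term, and that the scalar IQC variables $b,g,t,f_{11},f_{12},f_2$ decouple cleanly into the $2\times 2$ LMIs and the budget line rather than producing cross terms. Verifying that the particular block structure of $B_{1s},B_{1p},B_{2s},B_{2p},D_{21}$ in \eqref{eqn_LFTssmatrices} (which encodes the split $\tilde u=[u^T\ v^T]^T$ and the noise weighting $W$) makes $D_{21}=0$ under exact state measurement, so that the observer-side projection genuinely trivializes, is the detail most likely to require care; everything else is bookkeeping along the lines of our previous work \cite{farhood2021lpv,farhood2024robustness}.
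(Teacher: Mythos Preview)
Your proposal is essentially correct, but it takes a genuinely different route from the paper. You build the result from first principles: write the closed-loop bounded-real/dissipation inequality for $(G_{cl},\Delta)$, change variables to $R_0(k)=X(k)^{-1}$, and eliminate the static gain $K(k)$ via the Projection Lemma, with the observer-side branch trivializing because $C_{2s}=I$, $C_{2p}=0$, $D_{21}=0$. The paper instead \emph{specializes} the general dynamic output-feedback result (Theorem~\ref{thm1} in the appendix, itself imported from \cite{farhood2021lpv}) to a $(0,1)$-eventually periodic plant with exact state measurement. That general result carries \emph{two} storage sequences $R_0(k),S_0(k)$ coupled by $\begin{bmatrix}R_0 & I\\ I & S_0\end{bmatrix}\succeq 0$; exact measurement forces $N_S^T=[\,0\ I\,]$, so \eqref{eqn_thm1_BSC} collapses to $H_{12}^T S_0(k{+}1)H_{12}\prec \mathrm{diag}(\bar S,f_2I)-H_{22}^T\mathrm{diag}(\bar S,tI)H_{22}$, and Lemma~\ref{lemma1} is then invoked to replace $S_0(k{+}1)$ by $R_0(k{+}1)$, producing the Schur form \eqref{eqn_LPVSFsynthesis3}. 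The same Lemma~\ref{lemma1} maneuver turns $\Gamma^T S_0(0)\Gamma\prec F_1$ into the second inequality of \eqref{eqn_LPVSFsynthesis1}. So the paper's only new technical content beyond Theorem~\ref{thm1} is these two applications of Lemma~\ref{lemma1}; everything else is carried over verbatim.

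What each approach buys: the paper's route is very short because the heavy lifting (projection lemma, dualization of the structured scaling, eventual-periodicity argument) is already packaged in Theorem~\ref{thm1}, and the proof reduces to noting how exact measurement simplifies $N_S$ and then eliminating $S_0$. Your route is self-contained and conceptually cleaner for the state-feedback case---you never introduce $S_0$ at all---but you effectively re-derive the content of Theorem~\ref{thm1} along the way. One point to tighten in your sketch: you cite Lemma~\ref{lemma1} only for the parameter blocks $(R_i,S_i)$, $i\ge 1$, but in the paper its decisive use is for $i=0$ to pass from the $S_0(k{+}1)$-form of the observer-side inequality to the $R_0(k{+}1)$ Schur form; in your direct derivation the analogous step is the Schur complement that converts the observer-side condition (written in $X(k{+}1)=R_0(k{+}1)^{-1}$) into \eqref{eqn_LPVSFsynthesis3}.
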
 \vskip 0.05in
\begin{proof}
    The conditions in the theorem statement are derived by specializing Theorem~\ref{thm1}, given in Appendix~\ref{app_A}, to the case of $(0,\,1)$-eventually periodic LPV plants, i.e., standard (stationary) LPV plants, with exactly measurable states. Theorem~\ref{thm1} itself is closely related to the results in our previous work \cite{farhood2021lpv}.
    The initial state $z_0 \in \mathbb{I}$ of the lifted model can be expressed as $z_0 = \Gamma \xi$, where $\xi = (\xi_1, \, \xi_2)$, with $\xi_1 \in \mathbb{R}^n$ and $\xi_2 \in \mathbb{R}^{\bar{N}}$ satisfying $\lVert \xi_1 \rVert_2 \leq 1$, $\lVert \xi_2 \rVert_2 \leq 1$, and
    \begin{equation}
        \Gamma = \mathrm{diag}(P^{-\frac{1}{2}}, \, Q^{-\frac{1}{2}}). \label{eqn_UIC}
    \end{equation}
    Since $C_{2s} = I$, $C_{2p} = 0$, and $D_{21} = 0$, we get $N_S^T = [0_{(m+n_d)\times N} ~ I_{m+n_d}]$, and so, for $k = 0,\,1,\,\dots,\,\bar{h}$, Eq.~\eqref{eqn_thm1_BSC} in Theorem~\ref{thm1} simplifies to 
    \begin{equation*}
             H_{12}^TS_0(k+1)H_{12}  \prec \mathrm{diag}\left(\bar{S}(k),\,f_2I \right) - H_{22}^T \mathrm{diag}\left(\bar{S}(k),\,tI \right)H_{22}.       
    \end{equation*}
    The preceding inequality, together with the first inequality in \eqref{eqn_thm1_coupling} corresponding to $i = 0$ and $k=1,\,\dots,\,\bar{h}+1$, are equivalent to \eqref{eqn_LPVSFsynthesis3} for $k = 0,\,1,\,\dots,\,\bar{h}$ by Lemma~\ref{lemma1}. Similarly, the second inequality in \eqref{eqn_thm1_UIC}, together with the first inequality in \eqref{eqn_thm1_coupling} corresponding to $i = 0$ and $k=0$, are equivalent to the second inequality in \eqref{eqn_LPVSFsynthesis1}. The remaining conditions in Theorem~\ref{thm1} are  included as is in this theorem.
\end{proof} \vskip 0.05in

If the conditions in Theorem~\ref{col1} are feasible, an $(\bar{h},\,1)$-eventually periodic static controller $(G_c,\,\Delta_c)$, defined as
\begin{equation}
    \begin{bmatrix} \varphi^c_k \\ u_k  \end{bmatrix} = 
    \begin{bmatrix}   A^c_{pp}(k) & B^c_{p}(k) \\  C^c_{p}(k) & D^c(k) \end{bmatrix} 
    \begin{bmatrix} \vartheta^c_k \\ \hat{y}_k \end{bmatrix}, ~~ \vartheta^c_k = \Delta_c(k)\varphi^c_k \in \mathbb{R}^{m^c(k)}, 
    \label{eqn_LFTcon}
\end{equation}
can be constructed, where $\Delta_c(k) = \mathrm{diag}(\delta_k^{[1]} I_{m^c_1(k)}, \, \dots, \, \delta_k^{[p]} I_{m^c_p(k)})$, $m^c_i(k) = \text{rank}(R_i(k) - S_i^{-1}(k)) \leq m_i$, and $m^c(k) = \sum_{i=1}^{p}m^c_i(k)$ for all $i \in \mathbb{N}_p$ and $k \in \mathbb{Z}_+$. 
From \eqref{eqn_LFTcon}, we obtain $\varphi^c_k = A^c_{pp}(k)\Delta_c(k)\varphi^c_k  + B^c_{p}(k)\hat{y}_k$, which, after some algebraic manipulation, yields 
\[ \vartheta^c_k = \Delta_c(k)\varphi^c_k = \Delta_c(k)\left(I - A^c_{pp}(k)\Delta_c(k)\right)^{-1}B^c_{p}(k) \hat{y}_k.\]
The control law for the nonlinear system \eqref{eqn_NLeqn}, assuming no measurement noise, i.e., $\hat{y}_k = z_k$, and $z_k = \Phi(x_k)$, can then be written as 
\begin{equation*}
  u_k = C^c_{p}(k)\vartheta^c_k + D^c(k)\hat{y}_k = \left( D^c(k) + C^c_{p}(k)\Delta_c(k)(I - A^c_{pp}(k)\Delta_c(k))^{-1}B^c_{p}(k) \right)\Phi(x_k), \quad \delta_k = \mu\left(\Phi(x_k)\right).
\end{equation*}
Note that the controllers are synthesized for an LPV system with normalized scheduling parameters. Therefore, during controller implementation, the parameter values $\delta_k = \mu(\Phi(x_k))$ must be normalized by applying the same upper and lower parameter bounds that were used to ``normalize'' the LPV plant model.

In Theorem~\ref{col1}, the length $\bar{h}$ of the finite-horizon component of the controller is a design parameter. When the system has nonzero initial conditions in addition to exogenous disturbances, choosing $\bar{h} > 0$ can potentially enhance closed-loop performance by allowing the controller to initially adopt strategies that prioritize mitigating the effects of the uncertain initial state \cite{farhood2021lpv,farhood2008control}. Note that by increasing the finite horizon length of the desired controller, we are enlarging the solution space of the synthesis optimization problem, which will lead to either improved results or the same ones as before.

The controller matrices $A^c_{pp}(k)$, $B^c_{p}(k)$, $C^c_{p}(k)$, and $D^c(k)$ are obtained by solving the following LMI for $k = 0,1,\dots,\bar{h}$:
\begin{equation}
    \mathbf{H} + \mathbf{Q}^T\mathbf{J}(k)^T\mathbf{P} + \mathbf{P}^T\mathbf{J}(k)\mathbf{Q} \prec 0,  \quad \text{where}
    \label{eqn_kypLMI}
\end{equation}
\begin{equation*} 
    \begin{split}
    \mathbf{P} &= \begin{bmatrix} 0_{m^c \times N}  & 0_{m^c \times m}  & I_{m^c} & 0_{m^c \times (N + m + m^c)} & 0_{m^c \times n_d} & 0_{m^c \times n_e}  \\ B_{2s}^T & B_{2p}^T & 0_{n_u \times m^c}  & 0_{n_u \times (N + m + m^c)} & 0_{n_u \times n_d} & b^{-1/2}D_{12}^T \end{bmatrix}\!, \\[8pt]
    \mathbf{Q} &= \begin{bmatrix} 0_{m^c \times (N + m + m^c)} & 0_{m^c \times N} & 0_{m^c \times m} & I_{m^c} & 0_{m^c \times n_d} & 0_{m^c \times n_e} \\ 
    0_{N \times (N + m + m^c)} & I_N & 0_{N \times m} &  0_{N \times m^c} &  0_{N \times n_d} &  0_{N \times n_e}\end{bmatrix}\!, \\[8pt]
    \mathbf{H} &= \begin{bmatrix} - Y_{11} & -Y_{12} & \mathbf{A} & 0_{(N+m) \times m^c} & \mathbf{B} & 0_{(N+m) \times n_e} \\
                                      \ast & -I_{m^c} & 0_{m^c \times (N+m)} & 0_{m^c \times m^c} & 0_{m^c \times n_d} & 0_{m^c \times n_e} \\
                                       \ast & \ast & -X_{11} & -X_{12} & 0_{(N+m)\times n_d} & \mathbf{C}^T \\
                                       \ast & \ast & \ast & -X_{22} & 0_{m^c \times n_d} & 0_{m^c \times n_e} \\
                                       \ast & \ast & \ast & \ast & -I_{n_d} & (bf_2)^{-1/2}D_{11}^T \\
                                        \ast & \ast & \ast & \ast & \ast & -I_{n_e}
    \end{bmatrix}\!, 
    \end{split} 
\end{equation*}
\begin{equation*}
    \mathbf{J}(k) = \begin{bmatrix} A^c_{pp}(k) & B^c_{p}(k) \\ C^c_{p}(k) & D^c(k) \end{bmatrix}, \quad
     \mathbf{A} = \begin{bmatrix} A_{ss} & A_{sp} \\ A_{ps} & A_{pp} \end{bmatrix}, \quad \mathbf{B} = f_2^{-1/2}\begin{bmatrix} B_{1s} \\ B_{1p}  \end{bmatrix},  \quad  X_{12} = -\begin{bmatrix} 0_{N \times m^c} \\ \bar{S}(k)\bar{E} \end{bmatrix}, \quad Y_{12} = \begin{bmatrix} 0_{N \times m^c} \\ \bar{E} \end{bmatrix}, 
\end{equation*}
$\mathbf{C} = b^{-1/2}\begin{bmatrix} C_{1s} & C_{1p} \end{bmatrix}$, $Y_{11} = \mathrm{diag}\left(R_0(k+1), \, \bar{R}(k)\right)$, $X_{11} = \mathrm{diag}\left(R_0(k)^{-1}, \, \bar{S}(k)\right)$, $X_{22} = I + \bar{E}^T\bar{S}(k)\bar{E}$, and $\bar{E} \in \mathbb{R}^{m \times m^c}$ such that  $\bar{E} \bar{E}^T = \bar{R}(k) - \bar{S}(k)^{-1}$. Here, the time dependence of $\mathbf{H}$, $X_{11}$, $X_{12}$, $X_{22}$, $Y_{11}$, $Y_{12}$, $\bar{E}$, and $m^c$ has been suppressed for convenience.
These conditions are derived by specializing the procedure outlined in our previous work \cite{farhood2021lpv}, which builds on another earlier work \cite{farhood2012nonstationary}, to the static state-feedback controller case. This involves setting $C_{2s} = I$, $C_{2p} = 0$, $D_{21} = 0$, and $S_0(k) = R_0(k)^{-1}$ (which follows from Lemma~\ref{lemma1}) in our previous formulations \cite{farhood2021lpv,farhood2012nonstationary}.
We can take $\bar{E}(k) = (\bar{R}(k) - \bar{S}(k)^{-1})^{1/2}$ when $m_i^c(k) = m_i$ for all $i \in \mathbb{N}_p$. This can also be ensured by changing the first inequality in \eqref{eqn_LPVSFsynthesis4} to a strict inequality. 

\vskip 0.1in
\begin{corollary} \label{col2}
Suppose the lifted model defined in \eqref{eqn_Koopmansys} is LTI, i.e., $B_i = 0$ for $i \neq 0$. Then, a static state-feedback LTI synthesis ensuring the validity of the inequality in \eqref{eqn_perfinequality} exists if there exist a positive definite matrix $R \in \mathbb{S}^{N}$, a matrix $S \in \mathbb{R}^{n_u \times N}$, and positive scalars $b$, $f_{11}$, $f_{12}$, and $f_2$ such that
\begin{equation}
    \begin{bmatrix}
        - R & A_{ss} R + B_{2s} S & B_{1s} & 0 \\ \ast & - R & 0 & (C_{1s} R + D_{12} S)^T \\ \ast & \ast & -f_2 I & D_{11}^T \\ \ast & \ast & \ast & -b I
    \end{bmatrix} \prec 0, \quad b + f_{11} + f_{12} + f_2 < 2\gamma, \quad  \begin{bmatrix} F_1 & \Gamma^T \\ \Gamma & R \end{bmatrix} \succ 0, \label{eqn_LTISFsynthesis1}
\end{equation}
where $\Gamma$ and $F_1$ are defined in Theorem~\ref{col1}, and the state-space matrices are defined in \eqref{eqn_LFTssmatrices}. If the above problem is feasible, then the control law for the nonlinear system \eqref{eqn_NLeqn}, assuming no measurement noise and $z_k = \Phi(x_k)$, can be written as $u = SR^{-1}\Phi(x)$. 
\end{corollary}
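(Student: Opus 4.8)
The plan is to obtain Corollary~\ref{col2} as the parameter-free instance of Theorem~\ref{col1}. When $B_i=0$ for all $i\neq 0$, the summation in \eqref{eqn_Koopmansys} vanishes, so the plant is LTI and its LFT model \eqref{eqn_LFTkoop} collapses to the nominal LTI system alone: the perturbation $\Delta$ disappears, $p=0$, and every parameter-indexed state-space matrix ($A_{sp},A_{ps},A_{pp},B_{1p},B_{2p},C_{1p},C_{2p},D$) drops out. I would then apply Theorem~\ref{col1} with the design choice $\bar h=0$, so that $R_0(0)=R_0(\bar h+1)=R_0(1)=:R$ and the resulting $(\bar h,1)$-eventually periodic static state-feedback NSLPV controller reduces to a single constant gain, i.e., an LTI controller.

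Next I would reduce the conditions of Theorem~\ref{col1} to this setting. Since $p=0$, there are no matrices $R_i(k),S_i(k)$, so the first inequality in \eqref{eqn_LPVSFsynthesis4} disappears and only the scalar couplings $gf_2\geq 1$, $tb\geq 1$ survive; as the left-hand side of \eqref{eqn_LPVSFsynthesis2} is nondecreasing in $g$ and the matrix required to be positive definite in \eqref{eqn_LPVSFsynthesis3} is nonincreasing in $t$, the feasible values of $g$ and of $t$ form down-sets, so no generality is lost by fixing $g=1/f_2$ and $t=1/b$. Then \eqref{eqn_LPVSFsynthesis1} is exactly the budget inequality $b+f_{11}+f_{12}+f_2<2\gamma$ together with $\left[\begin{smallmatrix}F_1 & \Gamma^T\\ \Gamma & R\end{smallmatrix}\right]\succ 0$, which are the second and third conditions of \eqref{eqn_LTISFsynthesis1} (recall $z_0=\Gamma\xi$ with $\lVert\xi_1\rVert_2\leq1$, $\lVert\xi_2\rVert_2\leq1$ and $\Gamma$ as in \eqref{eqn_UIC}); and with $H=\left[\begin{smallmatrix}A_{ss}&B_{1s}\\ C_{1s}&D_{11}\end{smallmatrix}\right]$ (hence $H_{12}=B_{1s}$, $H_{22}=D_{11}$) and $\mathrm{Im}\,N_R=\mathrm{Ker}\,\left[\begin{smallmatrix}B_{2s}^T & D_{12}^T\end{smallmatrix}\right]$, conditions \eqref{eqn_LPVSFsynthesis2} and \eqref{eqn_LPVSFsynthesis3} become, respectively, the projected inequality $N_R^T\{H\,\mathrm{diag}(R,f_2^{-1}I)H^T-\mathrm{diag}(R,bI)\}N_R\prec 0$ and the LMI $\left[\begin{smallmatrix}f_2 I-b^{-1}D_{11}^TD_{11} & B_{1s}^T\\ B_{1s} & R\end{smallmatrix}\right]\succ 0$, the substitution $S_0(k)=R_0(k)^{-1}$ being already built into Theorem~\ref{col1} via Lemma~\ref{lemma1}.

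The remaining task is to fold these two conditions into the single LMI in \eqref{eqn_LTISFsynthesis1}, and here the plan is to invoke the elimination (two-sided projection) lemma: writing the first LMI in \eqref{eqn_LTISFsynthesis1} as $\mathbf M_0+\mathbf P^TS\mathbf Q+\mathbf Q^TS^T\mathbf P\prec 0$ — with $\mathbf Q$ selecting the second ($R$-scaled) block-column and $\mathbf P^T$ placing $B_{2s}$ and $D_{12}$ in the first and fourth block-rows, so that $\mathbf M_0$ is the $S$-free part — solvability in $S$ is equivalent to $N_{\mathbf Q}^T\mathbf M_0N_{\mathbf Q}\prec 0$ and $N_{\mathbf P}^T\mathbf M_0N_{\mathbf P}\prec 0$. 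The first, after deleting the unconstrained block and two Schur complements on the negative-definite blocks $-R$ and $-bI$, is precisely $\left[\begin{smallmatrix}f_2 I-b^{-1}D_{11}^TD_{11} & B_{1s}^T\\ B_{1s} & R\end{smallmatrix}\right]\succ 0$; the second, using that $\mathrm{Ker}\,\mathbf P$ is spanned by $N_R$ on the $(z_{k+1},e)$-coordinates and the identity on the $(z,d)$-coordinates, then a Schur complement on the negative-definite block $\mathrm{diag}(-R,-f_2 I)$ together with the identity $\left[\begin{smallmatrix}A_{ss}R & B_{1s}\\ C_{1s}R & D_{11}\end{smallmatrix}\right]=H\,\mathrm{diag}(R,I)$, collapses to $N_R^T\{H\,\mathrm{diag}(R,f_2^{-1}I)H^T-\mathrm{diag}(R,bI)\}N_R\prec 0$. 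This closes the equivalence, so feasibility of \eqref{eqn_LTISFsynthesis1} yields the desired static state-feedback LTI synthesis. Finally, $p=0$ forces $m^c(k)=0$ in \eqref{eqn_LFTcon}, so $\Delta_c$ is absent and the controller is the static gain $u_k=D^c\hat y_k$; with no measurement noise $\hat y_k=z_k=\Phi(x_k)$, and specializing the reconstruction LMI \eqref{eqn_kypLMI} to $m^c=0$ gives $D^c=SR^{-1}$. The same conclusion, including validity of \eqref{eqn_perfinequality}, can be verified directly: the congruence of \eqref{eqn_LTISFsynthesis1} by $\mathrm{diag}(R^{-1},R^{-1},I,I)$ with $S=KR$ certifies the dissipation inequality $z_{k+1}^TR^{-1}z_{k+1}-z_k^TR^{-1}z_k<f_2\lVert d_k\rVert_2^2-b^{-1}\lVert e_k\rVert_2^2$ for the closed loop under $u=SR^{-1}\Phi(x)$, whence $\lVert e\rVert_{\ell_2}^2<b(f_{11}+f_{12}+f_2)<\gamma^2$ by summation, the initial-condition LMI, and AM–GM on $b+f_{11}+f_{12}+f_2<2\gamma$.

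The only genuine difficulty I foresee is the bookkeeping in the last two steps: correctly identifying the kernels $\mathrm{Ker}\,\mathbf P$ and $\mathrm{Ker}\,\mathbf Q$, keeping the block reordering of $\mathbf M_0$ straight, chaining the Schur complements, and confirming that the monotonicity argument really does permit fixing $g=1/f_2$ and $t=1/b$. Each step is elementary, but they must all line up for the compact LMI \eqref{eqn_LTISFsynthesis1} to fall out of the parameter-dependent conditions of Theorem~\ref{col1}.
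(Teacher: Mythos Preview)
Your proof is correct, but it follows a different route from the paper's. The paper does not invoke the elimination lemma at all: instead it specializes the controller-\emph{reconstruction} LMI \eqref{eqn_kypLMI} (which already carries $D^c$ as an unknown) to the case $m=m^c=0$, obtaining the intermediate LMI \eqref{eqn_LTISFsynthesis2} in the variables $R_0(\cdot)$ and $D^c(\cdot)$; it then asserts that one may ``lump'' the solvability step (Theorem~\ref{col1}) and the reconstruction step together by replacing \eqref{eqn_LPVSFsynthesis2}--\eqref{eqn_LPVSFsynthesis4} with \eqref{eqn_LTISFsynthesis2}, sets $\bar h=0$, and finally applies the congruence $\mathrm{diag}(I,R,\sqrt{f_2}I,\sqrt{b}I)$ with the change of variable $S=D^c(0)R$ to reach \eqref{eqn_LTISFsynthesis1}. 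Your argument instead specializes the \emph{solvability} conditions of Theorem~\ref{col1} directly (eliminating $g,t$ by monotonicity and setting $p=0$), and then uses the projection lemma to show that the two surviving projected inequalities are exactly the two kernel conditions for solvability of \eqref{eqn_LTISFsynthesis1} in $S$; you also add a self-contained Lyapunov/AM--GM verification of \eqref{eqn_perfinequality}. Your route is longer but more explicit about why the single LMI \eqref{eqn_LTISFsynthesis1} is equivalent to the pair \eqref{eqn_LPVSFsynthesis2}--\eqref{eqn_LPVSFsynthesis3}; the paper's route is shorter because it re-uses the reconstruction machinery, but the ``lumping'' step there is precisely the elimination-lemma equivalence you spell out. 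One minor point: your direct dissipation argument already proves the corollary outright, so the elimination-lemma detour, while instructive, is not strictly needed once you have that final paragraph.
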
 \vskip 0.05in
\begin{proof}
    When $B_i = 0$ for $i \neq 0$, the LFT representation in \eqref{eqn_LFTkoop} reduces to an LTI system. Correspondingly, the controller in \eqref{eqn_LFTcon} simplifies to $u_k = D^c(k)\hat{y}_k$. Under these conditions, the LMI in \eqref{eqn_kypLMI} takes the following form:
    \begin{equation}
        \begin{bmatrix}
        - R_0(k+1) & A_{ss} + B_{2s}D^c(k) & f_2^{-1/2}B_{1s} & 0 \\ \ast & - R_0(k)^{-1} & 0 & b^{-1/2}(C_{1s} + D_{12} D^c(k))^T \\ \ast & \ast & -I & (bf_2)^{-1/2}D_{11}^T \\ \ast & \ast & \ast & I
    \end{bmatrix} \prec 0.
    \label{eqn_LTISFsynthesis2}
    \end{equation}
    This inequality is obtained by replacing the matrices $A_{sp}$, $A_{ps}$, $A_{pp}$, $B_{1p}$, $B_{2p}$, $C_{1p}$, $\bar{R}(k)$, $\bar{S}(k)$, and $\bar{E}(k)$ in \eqref{eqn_kypLMI} with empty matrices having at least one dimension equal to zero, as the dimensions $m$ and $m^c$ become zero for an LTI plant and an LTI or LTV controller, respectively. Rather than solving for $R_0$ separately using Theorem~\ref{col1} and then, if successful, computing $D^c$, we can lump these two steps together. That is, we can solve for $R_0$ and $D^c$ simultaneously by replacing the inequalities \eqref{eqn_LPVSFsynthesis2}-\eqref{eqn_LPVSFsynthesis4} with \eqref{eqn_LTISFsynthesis2} in Theorem~\ref{col1}. For the synthesis of a time-invariant controller, we set $\bar{h} = 0$. The conditions in the corollary statement are then obtained from the modified theorem by \rnum{1}) defining $R_0(0) = R$; \rnum{2}) pre- and post-multiplying \eqref{eqn_LTISFsynthesis2} by $\mathrm{diag}(I,\, R,\, \sqrt{f_2}I, \, \sqrt{b}I)$; and \rnum{3}) setting $D^c(0)R = S$ in the resulting inequality. 
\end{proof}

\section{IQC Analysis}  \label{sec_iqcAnalysis}

In the previous section, we presented state-feedback controller synthesis results under the assumption that the system states are exactly measurable. While this assumption simplifies the controller's structure, enabling the synthesis of static controllers instead of dynamic ones, it is often impractical. Consequently, it is essential to account for the measurement noise in any analysis associated with the synthesized static controllers to ensure that the closed-loop performance  remains satisfactory even when the state measurements are corrupted with noise. Moreover, during synthesis, the scheduling parameters are treated as static LTV (SLTV) uncertainties that can vary arbitrarily fast. However, in practice, these parameters, which depend on the system state, usually exhibit bounded rates of change. The process noise inputs are also modeled as $\ell_2$ signals during synthesis, which encompass a broad range of disturbances, some of which may not be seen in real-world conditions. These factors motivate the need for an analysis approach that addresses the aforementioned limitations to yield more realistic and less conservative performance guarantees for the LPV system.

\subsection{Analysis Result}

Here, we present an IQC-based analysis approach that extends the result of our previous work \cite{farhood2024robustness} to accommodate signal IQCs and nonlinear uncertainties in the analysis. The analysis result is developed for an $(h_M,\,q_M)$-eventually periodic uncertain system  $(M,\,\bm{\Delta}_M) = \{ (M,\,\Delta_M) ~|~ \Delta_M \in \bm{\Delta}_M \}$ having the following space-space representation: 
\begin{equation}
    \begin{bmatrix} x^M_{k+1} \\ \varphi^M_k \\ e_k  \end{bmatrix} = 
    \begin{bmatrix} A_{M}(k) & B_{M_1}(k) & B_{M_2}(k) \\ C_{M_1}(k) & D_{M_{11}}(k) & D_{M_{12}}(k) \\C_{M_2}(k) & D_{M_{21}}(k) & D_{M_{22}}(k) \end{bmatrix} 
    \begin{bmatrix} x^M_k \\ \vartheta^M_k \\ d_k \end{bmatrix}, \quad \vartheta^M_k = \Delta_M(k) \varphi^M_k,  
    \label{eqn_CLLFT}
\end{equation}
where $x^M_k \in R^{n_M(k)}$, $d_k  \in R^{n_d(k)}$, and $e_k  \in R^{n_e(k)}$ denote the state, disturbance input, and performance output of the system at discrete-time instant $k$, respectively.

Let $\Pi$ be a self-adjoint operator factored as $\Psi^{\ast}J\Psi$, where $J= [J_{ij}]_{i,j=1,2}$, $ J_{ij} = \mathrm{diag}(J_{ij}(0),J_{ij}(1),\ldots)$ is a bounded block-diagonal (memoryless) operator, and $\Psi$ is a linear, bounded, causal operator that can be generally represented by an asymptotically stable, discrete-time, LTV system. The set $\bm{\Delta}_M$ satisfies the IQC defined by $\Pi$ (denoted $\bm{\Delta}_M \in \text{IQC}(\Pi)$) if for $x^{\Psi}_0=0$ and all $\varphi^M \in \ell_2$, $\vartheta^M = \Delta_M\varphi^M$, and $\Delta_M \in \bm{\Delta}_M$, the following condition holds:
\begin{equation}
\sum_{k=0}^{\infty} r_k^TJ(k)r_k \geq 0, \quad \text{where}
\label{eqn_multdelta}
\end{equation}
\begin{equation*}
    \begin{bmatrix} x^\Psi_{k+1} \\ r^{[1]}_{k} \rule{0mm}{4mm} \\ r^{[2]}_{k} \rule{0mm}{4mm}  \end{bmatrix} = 
    \begin{bmatrix} A_{\Psi}(k) & B_{\Psi_1}(k) & B_{\Psi_2}(k) \\ C_{\Psi_1}(k) & D_{\Psi_{11}}(k) & D_{\Psi_{12}}(k) \\C_{\Psi_2}(k) & D_{\Psi_{21}}(k) & D_{\Psi_{22}}(k) \end{bmatrix} \begin{bmatrix} x^{\Psi}_k\\ \varphi^M_k \\ \vartheta^M_k  \end{bmatrix},  
\end{equation*}
$r_k = (r^{[1]}_k,\, r^{[2]}_k)$, $(A_\Psi(k),\,B_{\Psi_1}(k),\, \dots, \,D_{\Psi_{22}}(k))$ are the matrices defining the state-space realization of $\Psi$, and $x^{\Psi}_k \in \mathbb{R}^{n_\Psi(k)}$ denotes the state of the system $\Psi$.  Note that the matrix $J(k)=[J_{ij}(k)]_{i,j=1,2}$ is partitioned conformably with the partitioning of $r_k=(r^{[1]}_k,\, r^{[2]}_k)$. The operator $\Pi = [\Pi_{ij}]_{i,j=1,2}$ is a positive-negative multiplier \cite{HuLacerdaSeiler2017} if $\ip{\mu}{\Pi_{11}\mu}_{\ell_2}\geq \epsilon \ip{\mu}{\mu}_{\ell_2}$ and $\ip{\nu}{\Pi_{22}\nu}_{\ell_2}\leq -\epsilon \ip{\nu}{\nu}_{\ell_2}$ for some $\epsilon\geq 0$ and all $\mu,\nu\in\ell_2$, where $\ip{\cdot}{\cdot}_{\ell_2}$ denotes the inner product on $\ell_2$.

The set $\mathbb{D} \subseteq \ell_2$ satisfies the signal IQC defined by the self-adjoint operator $\Omega = \Theta^\ast U \Theta$ (denoted $\mathbb{D} \in \text{SigIQC}(\Omega)$), where $U$ and $\Theta$ are defined similarly to $J$ and $\Psi$, respectively, if for $x^\Theta_0=0$ and all $d \in \mathbb{D}$, the following holds:
\begin{equation}
\sum_{k=0}^{\infty} l_k^T U(k) l_k \geq 0, \quad \text{where}
\label{eqn_multdis}
\end{equation}
\vspace{-1mm}
\begin{equation*}
    \begin{bmatrix} x^{\Theta}_{k+1} \\  l_k  \end{bmatrix} = 
    \begin{bmatrix}  A_{\Theta}(k) & B_{\Theta}(k)\\C_{\Theta}(k) & D_{\Theta}(k)  \end{bmatrix} 
    \begin{bmatrix} x^{\Theta}_k  \\  d_k \end{bmatrix}.  
\end{equation*}
Given systems $M$, $\Psi$, and $\Theta$, an augmented system $L$ can be formed with state $x^L = (x^M,\, x^\Psi,\, x^\Theta)$, input $(\vartheta^M,\, d)$, output $(r^{[1]},\, e,\, r^{[2]},\, d,\, l)$, and state-space realization given by $(A_L(k),\, B_L(k),\, C_L(k),\, D_L(k))$.
Using \eqref{eqn_CLLFT}, the state-space equation in \eqref{eqn_multdelta} can be rewritten as
\begin{equation}
    \begin{bmatrix} x^\Psi_{k+1} \\ r^{[1]}_{k} \rule{0mm}{4mm} \\ r^{[2]}_{k} \rule{0mm}{4mm}  \end{bmatrix} = 
    \begin{bmatrix} A_{\Psi}(k) & B_{\Psi_1}(k) & B_{\Psi_2}(k) \\ C_{\Psi_1}(k) & D_{\Psi_{11}}(k) & D_{\Psi_{12}}(k) \\C_{\Psi_2}(k) & D_{\Psi_{21}}(k) & D_{\Psi_{22}}(k) \end{bmatrix} 
    \begin{bmatrix} 0 & I & 0 & 0 & 0 \\ C_{M_1}(k) & 0 & 0 & D_{M_{11}}(k) & D_{M_{12}}(k) \\ 0 & 0 & 0 & I & 0\end{bmatrix} \begin{bmatrix} x^L_k  \\ \vartheta^M_k \\ d_k \end{bmatrix}, 
    \label{eqn_augss1}
\end{equation}
where $x^L_k=\begin{bmatrix}(x^M_k)^T& (x^\Psi_k)^T& (x^\Theta_k)^T\end{bmatrix}^T$. Similarly, the equations in \eqref{eqn_CLLFT} and \eqref{eqn_multdis} can be expressed as
\begin{equation}
\begin{split}
    \begin{bmatrix} x^M_{k+1}  \\ e_k  \end{bmatrix} &= 
    \begin{bmatrix} A_{M}(k) & B_{M_1}(k) & B_{M_2}(k) \\C_{M_2}(k) & D_{M_{21}}(k) & D_{M_{22}}(k) \end{bmatrix} 
    \begin{bmatrix} I & 0 & 0 & 0 & 0 \\ 0 & 0 & 0 & I & 0 \\ 0 & 0 & 0 & 0 & I \end{bmatrix}
    \begin{bmatrix} x^L_k  \\ \vartheta^M_k \\ d_k \end{bmatrix}, \\
    \begin{bmatrix} x^{\Theta}_{k+1} \\  l_k  \end{bmatrix} &= 
    \begin{bmatrix}  A_{\Theta}(k) & B_{\Theta}(k)\\C_{\Theta}(k) & D_{\Theta}(k)  \end{bmatrix} 
    \begin{bmatrix} 0 & 0 & I & 0 & 0 \\ 0 & 0 & 0 & 0 & I \end{bmatrix}
    \begin{bmatrix} x^L_k  \\ \vartheta^M_k \\ d_k \end{bmatrix}. 
    \end{split}
    \label{eqn_augss2}
\end{equation}
Finally, from \eqref{eqn_augss1} and \eqref{eqn_augss2}, the state-space equation of the augmented system $L$ can be obtained easily through straightforward matrix manipulations with the state-space matrices given by
\begin{align*}
    A_L(k) &= \begin{bmatrix}A_M(k) & 0 & 0 \\ B_{\Psi_1}(k)C_{M_1}(k) & A_{\Psi}(k) & 0 \\ 0 & 0 & A_{\Theta}(k)   \end{bmatrix}, \hspace{9mm}
    B_L(k) = \begin{bmatrix}B_{M_1}(k) & B_{M_2}(k) \\ B_{\Psi_1}(k)D_{M_{11}}(k) + B_{\Psi_2}(k) &   B_{\Psi_1}(k)D_{M_{12}}(k) \\
    0 & B_{\Theta}(k) \end{bmatrix}, \\[2pt]
    C_L(k) &= \begin{bmatrix} D_{\Psi_{11}}(k)C_{M_1}(k) & C_{\Psi_1}(k) & 0 \\ C_{M_2}(k) & 0 & 0 \\ D_{\Psi_{21}}(k)C_{M_1}(k) & C_{\Psi_2}(k) & 0 \\ 0 & 0 & 0 \\ 0 & 0 & C_{\Theta}(k) \end{bmatrix}, \quad
    D_L(k) = \begin{bmatrix} D_{\Psi_{11}}(k) D_{M_{11}}(k) + D_{\Psi_{12}}(k) & D_{\Psi_{11}}(k) D_{M_{12}}(k) \\
            D_{M_{21}}(k) & D_{M_{22}}(k) \\
            D_{\Psi_{21}}(k) D_{M_{11}}(k) + D_{\Psi_{22}}(k) & D_{\Psi_{21}}(k) D_{M_{12}}(k) \\
            0 & I \\ 0 & D_{\Theta}(k)
            \end{bmatrix}.
\end{align*}

\vskip 0.1in
\begin{thm} \label{thm2}
Consider the uncertain system $(M,\,\bm{\Delta}_M)$ described in \eqref{eqn_CLLFT}. Let $d \in \mathbb{D} \subseteq \ell_2$, and suppose the initial state $x^M_0$ of the system is uncertain and can be expressed as $x^M_0 = \Gamma \xi$, where $\Gamma \in \mathbb{R}^{n_M(0) \times s}$ and $\xi = (\xi_1, \, \dots, \, \xi_a)$, with $\xi_i \in \mathbb{R}^{s_i} $, $\lVert \xi_i \rVert_2 \leq 1$, and $\sum_{i=1}^{a} s_i = s \leq n_M(0)$. Then, this uncertain system has a robust performance level of $\gamma$, i.e.,
\begin{equation*}
    \sup \, \{\lVert e \rVert_{\ell_2} \, \mid \,\, \lVert \xi_1 \rVert_{2} \leq 1,\, \dots,\, \lVert \xi_a \rVert_{2} \leq 1, 
     \lVert d \rVert_{\ell_2} \leq 1,\, d \in \mathbb{D},\, \Delta_M \in \bm{\Delta}_M \} < \gamma, \quad \text{if}
\end{equation*}
\begin{enumerate}[(a)]
    \item $(M,\,\bm{\Delta}_M)$ is well-posed;
    \item there exists a positive-negative multiplier $\Pi = \Psi^\ast J \Psi$ with $(h_\Pi,\,q_\Pi)$-eventually periodic factors $\Psi$ and $J$ such that $\bm{\Delta}_M \in \text{IQC}(\Pi)$;
    \item there exists a signal IQC multiplier $\Omega = \Theta^\ast U \Theta$ with $(h_\Omega,\,q_\Omega)$-eventually periodic factors $\Theta$ and $U$ such that $\mathbb{D} \in \text{SigIQC}(\Omega)$; and
    \item there exist positive scalars $t$, $f_{11}$, $f_{12}$, \dots, $f_{1a}$, $f_2$, $g$ and a sequence $\bar{X}(k) = [\bar{X}_{ij}(k)]_{i,j=1,2,3} \in \mathbb{S}^{n_M(k)+n_\Psi(k)+n_\Theta(k)}$ for $k = 0,\,1,\,\dots,\,h+q$, where $h = \max{(h_M,\,h_\Pi,\,h_\Omega)}$, $q$ is the least common multiple of $q_M$, $q_\Pi$, and $q_\Omega$, $\bar{X}_{11}(k) \in \mathbb{S}^{n_M(k)}$, and $\bar{X}(h+q) = \bar{X}(h)$, satisfying the following LMIs:
\begin{align}
  & t + \sum_{i=1}^{a} f_{1i} + f_2 < 2\gamma, \quad \Gamma^T \bar{X}_{11}(0) \Gamma \prec F_1, \quad \begin{bmatrix} g & 1 \\ 1 & t \end{bmatrix} \succeq 0, \label{eqn_IQCanalysis1}
\\
 & \begin{bmatrix} I & 0 \\ A_L(k) & B_L(k) \\ C_L(k) & D_L(k) \end{bmatrix}^T \begin{bmatrix}-\bar{X}(k)&0&0\\0& \bar{X}(k+1)&0\\0&0&\tilde{J}(k) \end{bmatrix} \begin{bmatrix} I & 0 \\ A_L(k) & B_L(k) \\ C_L(k) & D_L(k) \end{bmatrix}  \prec 0, \label{eqn_IQCanalysis2}
\end{align}

for $k = 0,\,1,\,\dots,\,h+q-1$, with $F_1 = \mathrm{diag}(f_{11} I_{s_1},\, f_{12} I_{s_2},\, \dots,\, f_{1a} I_{s_a})$ and
\begin{equation*}
     \tilde{J}(k) = \begin{bmatrix} J_{11}(k) & 0 & J_{12}(k) & 0 & 0 \\ 0 & g I_{n_e(k)} & 0 & 0 & 0 \\ J_{12}^T(k) & 0 & J_{22}(k) &0 &0\\
    0 & 0 & 0 & -f_2I_{n_d(k)} & 0 \\    0 & 0 & 0 & 0 & U(k) \end{bmatrix}. 
\end{equation*}
\end{enumerate}
\end{thm}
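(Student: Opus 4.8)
The plan is to run the standard dissipation-inequality argument of IQC analysis on the augmented system $L$ constructed just before the statement, which has state $x^L=(x^M,\,x^\Psi,\,x^\Theta)$, input $(\vartheta^M,\,d)$, output $(r^{[1]},\,e,\,r^{[2]},\,d,\,l)$, and realization $(A_L,B_L,C_L,D_L)$; by construction the $x^\Psi$ and $x^\Theta$ components are exactly the states of $\Psi$ and $\Theta$ driven by $(\varphi^M,\vartheta^M)$ and $d$, respectively, as in \eqref{eqn_multdelta} and \eqref{eqn_multdis}. Since $h=\max(h_M,h_\Pi,h_\Omega)$ and $q$ is a common multiple of $q_M,q_\Pi,q_\Omega$, the system $L$ is $(h,q)$-eventually periodic, and extending $\bar X$ via $\bar X(h+q)=\bar X(h)$ makes \eqref{eqn_IQCanalysis2} hold for every $k\in\mathbb{Z}_+$. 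First I would introduce the quadratic storage function $V(k)=(x^L_k)^T\bar X(k)\,x^L_k$ and evaluate the quadratic form in \eqref{eqn_IQCanalysis2} at $[\,(x^L_k)^T\ \ (\vartheta^M_k)^T\ \ d_k^T\,]^T$ along closed-loop trajectories; using the dynamics of $L$ and the block structure of $\tilde J(k)$ this yields, for all $k$, the pointwise dissipation inequality
\[
V(k+1)-V(k)+r_k^TJ(k)r_k+g\lVert e_k\rVert_2^2-f_2\lVert d_k\rVert_2^2+l_k^TU(k)l_k<0 .
\]

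Summing this from $k=0$ to $T$ gives $V(T+1)-V(0)+\sum_{k=0}^{T}(\cdot)<0$. Because $x^\Psi_0=x^\Theta_0=0$, only the $(1,1)$ block of $\bar X(0)$ enters $V(0)$, so $V(0)=\xi^T\Gamma^T\bar X_{11}(0)\Gamma\,\xi$; combined with the second inequality of \eqref{eqn_IQCanalysis1} and the bounds $\lVert\xi_i\rVert_2\le1$, this gives $V(0)<\sum_{i=1}^a f_{1i}$.

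The step I expect to be the main obstacle is the passage $T\to\infty$. The IQC in \eqref{eqn_multdelta} and the signal IQC in \eqref{eqn_multdis} hold only for $\ell_2$ inputs, so one must first certify that the interconnection $(M,\Delta_M)$ in \eqref{eqn_CLLFT} is $\ell_2$-stable for every $d\in\ell_2$ and $\Delta_M\in\bm{\Delta}_M$ --- so that $\varphi^M,\vartheta^M,e\in\ell_2$, $x^L_k\to0$, and $V(T+1)\to0$. This is exactly where the well-posedness assumption (a) and the \emph{positive-negative} structure of $\Pi$ in (b) come in: following the hard-IQC machinery of \cite{HuLacerdaSeiler2017}, the positive-negative property allows $\Pi$ to be used over finite horizons, so that the strict dissipation inequality bounds the $T$-truncated interconnected signals uniformly in $T$ \emph{without} a scaling homotopy $\Delta_M\mapsto\tau\Delta_M$ --- the usual device being unavailable here because $\bm{\Delta}_M$ may contain nonlinear uncertainties that are not closed under such scaling. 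I would carry this out along the lines of \cite{farhood2024robustness}, now additionally carrying the signal-IQC term $l_k^TU(k)l_k$ through the argument (it only helps, being eventually nonnegative), to obtain $\ell_2$-membership of the closed-loop signals and $\lim_T V(T+1)\ge0$.

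Once the limit is legitimate, the estimate closes with elementary manipulations. Using $\bm{\Delta}_M\in\text{IQC}(\Pi)$ and $\mathbb{D}\in\text{SigIQC}(\Omega)$ to discard the nonnegative tails $\sum_k r_k^TJ(k)r_k\ge0$ and $\sum_k l_k^TU(k)l_k\ge0$, together with $\lim_T V(T+1)\ge0$, $V(0)<\sum_i f_{1i}$, and $\lVert d\rVert_{\ell_2}\le1$, I obtain $g\lVert e\rVert_{\ell_2}^2<\sum_{i=1}^a f_{1i}+f_2=:\alpha$. The third inequality in \eqref{eqn_IQCanalysis1} gives $gt\ge1$, hence $1/g\le t$ and $\lVert e\rVert_{\ell_2}^2<\alpha/g\le\alpha t\le\big((t+\alpha)/2\big)^2$, where the last step is the arithmetic--geometric mean inequality; by the first inequality in \eqref{eqn_IQCanalysis1} this bound is strictly less than $\gamma^2$. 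Therefore $\lVert e\rVert_{\ell_2}<\gamma$ uniformly over all admissible $\xi$, $d$, and $\Delta_M$, which is the asserted robust performance level. The remaining work is routine: checking that the quadratic-form evaluation in the first step reproduces precisely the weight $\tilde J(k)$, and verifying the eventual periodicity and dimension bookkeeping for $L$.
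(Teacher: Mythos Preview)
Your proposal is correct and follows essentially the same dissipation-inequality route as the paper's own proof: evaluate \eqref{eqn_IQCanalysis2} along closed-loop trajectories, sum, use robust stability (which the paper obtains by citing Theorem~1 of \cite{fry2021robustness} after dropping $d$, rather than the hard-IQC/positive-negative argument you sketch), discard the nonnegative IQC and signal-IQC sums, and bound $V(0)$ via the second inequality of \eqref{eqn_IQCanalysis1}. The only cosmetic difference is the closing algebra: the paper shows $\lVert e\rVert_{\ell_2}^2<t\alpha<2\gamma t-t^2=\gamma^2-(\gamma-t)^2\le\gamma^2$ via a short optimization argument in $t$, whereas your AM--GM step $\alpha t\le\big((t+\alpha)/2\big)^2<\gamma^2$ is equivalent and arguably cleaner.
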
 \vskip 0.05in
\begin{proof}
Pre- and post-multiply inequality \eqref{eqn_IQCanalysis2} by $\begin{bmatrix}  (x^L_k)^T & (\vartheta^M_k)^T & (d_k)^T \end{bmatrix}$ and its transpose to get
\begin{equation*}
(x^L_{k+1})^T \bar{X}(k+1)x^L_{k+1} - (x^L_{k})^T \bar{X}(k) x^L_k + r_k^TJ(k)r_k + g e_k^Te_k - f_2 d_k^Td_k + l_k^TU(k)l_k<0.
\end{equation*}
The conditions of Theorem \ref{thm2} ensure the robust stability of the uncertain system; specifically, in the absence of a disturbance input $d$, condition \eqref{eqn_IQCanalysis2} reduces to the linear operator inequality in Theorem~1 of \cite{fry2021robustness}, which, along with items $(a)$ and $(b)$ in the above theorem statement, implies that the uncertain system is robustly stable. Summing both sides of the above inequality from $k=0$ to $k=\infty$ and using the fact that the uncertain system is robustly stable,~we~get
\begin{equation}
\sum_{k=0}^{\infty}r_k^TJ(k)r_k + \sum_{k=0}^{\infty}l_k^TU(k)l_k + g \lVert e\rVert_{\ell_2}^2< f_2\lVert d\rVert_{\ell_2}^2 + (x^L_0)^T\bar{X}(0)x^L_0.\label{eq:iqcproof}
\end{equation}
Note that $(x^L_0)^T\bar{X}(0)x^L_0=(x^M_0)^T\bar{X}_{11}(0)x^M_0$, as the IQC filters, $\Psi$ and $\Theta$, have zero initial states. Since $x^M_0 = \Gamma \xi$, we can write $(x^L_0)^T\bar{X}(0)x^L_0 = \xi^T\Gamma^T\bar{X}_{11}(0)\Gamma \xi$, which is less than $\xi^T F_1 \xi$ for $\xi \neq 0$ from the second inequality in \eqref{eqn_IQCanalysis1}. Also, $\xi^T F_1 \xi = \sum_{i=1}^{a} f_{1i} \lVert \xi_i \rVert_2^2 \leq \sum_{i=1}^{a} f_{1i}$  since $\lVert \xi_i \rVert_2 \leq 1$ for $i=1,\dots,a$. Thus,  we have $(x^L_0)^T\bar{X}(0)x^L_0 < \sum_{i=1}^a f_{1i}$; this inequality also holds for $\xi = 0$ since $f_{1i}$, for $i=1,\dots,a$, are positive.

From (\ref{eq:iqcproof}), observing that the first two terms in this inequality  are nonnegative, $\lVert d\rVert_{\ell_2}\leq 1$, $(x^L_0)^T\bar{X}(0)x^L_0 < \sum_{i=1}^a f_{1i}$, and  $g\geq t^{-1}$ (which follows from the third inequality in  \eqref{eqn_IQCanalysis1}), we obtain
\begin{equation}
    \lVert e\rVert_{\ell_2}^2< t \left(\sum_{k=1}^{a}f_{1i} + f_2 \right).
    \label{eqn_thm2proof}
\end{equation}
From the first inequality in \eqref{eqn_IQCanalysis1}, we  have $\sum_{i=1}^{a}f_{1i} + f_2<2\gamma-t$; then, the inequality \eqref{eqn_thm2proof} can be written as  $\lVert e\rVert_{\ell_2}^2 < 2\gamma t-t^2=\tilde{\gamma}$. Clearly, we would like to choose the value of $t>0$ that results in the minimum value of $\gamma$ for a given $\tilde{\gamma}$. That is, we would like to find the value of $t$ that minimizes the function $\gamma=f(t)=(\tilde{\gamma}+t^2)/(2t)$, which is convex on $\mathbb{R}_{++}$. It is not difficult to see that the optimal point is $t^*=\sqrt{\tilde{\gamma}}$. For this value of $t$, $\gamma=\sqrt{\tilde{\gamma}}$, which leads to the inequality $\lVert e\rVert_{\ell_2}<\gamma$.
\end{proof}

In addition to being a generalization of the main result of our prior work \cite{farhood2024robustness}, Theorem \ref{thm2} can be viewed as an extension of a counterpart result given in \cite{fry2021robustness}. Specifically, Theorem \ref{thm2} can accommodate multiple initial conditions to better characterize the uncertain initial state; for instance, $a$ vectors, for some positive integer $a$, composed of different subsets of the uncertain initial state variables can be restricted to lie in different ellipsoids. If we have only one condition on the uncertain initial state, i.e., the vector composed of all the uncertain initial state variables is restricted to lie in an ellipsoid, then this result becomes merely a variant of the one provided in \cite{fry2021robustness}.  While it is true that enabling a better characterization of the uncertain initial state introduces additional variables into the analysis problem, namely, $a-1$ variables, these variables are scalars and the associated modifications in the LMIs do not alter the size of the constraints. Thus, the added computational complexity will not be significant in general. As for the conservativeness of the result, it turns out that the finer the partitioning of the vector of uncertain initial state variables is, which corresponds to a larger value of $a$ and, hence, more conditions on the uncertain initial state variables, the more conservative the result becomes. In our prior work \cite{farhood2024robustness}, a different approach is used to prove the less general version of this result, where only linear perturbations are considered and signal IQCs are not included. The argument used in the proof therein, which involves solving a square $\ell_2$ problem, touches on the added conservativeness stemming from increasing the number of these partitions or, as viewed in that work, ``inputs'' to the system. Typically, an analysis optimization problem is solved, which involves minimizing the robust performance level $\gamma$. Since the nominal system is assumed to be stable, then the major factor for the feasibility of this problem would be the uncertainties, specifically, the number, size, and type of the different uncertainties and their associated bounds. For instance, an infeasible problem could be rendered feasible by reducing the bounds on the uncertainties. The choice of the IQC multipliers plays a crucial role as well, and that is why in the analysis optimization problem we jointly solve for appropriate multipliers from prespecified sets of suitable multipliers for the various uncertainties and disturbance inputs that result in the least conservative analysis outcomes afforded by the approach.

\subsection{Controller Analysis and Tuning} \label{sec_iqctuning}

The feedback interconnection of the controller $(G_c,\,\Delta_c)$ from \eqref{eqn_LFTcon} with the plant $(G,\,\Delta)$ from \eqref{eqn_LFTkoop}, as shown in Figure~\ref{fig_LFTKoopman}, results in a closed-loop $(h_M,\,q_M)$-eventually periodic LFT representation $(M,\,\Delta_M)$. Here, $h_M = \bar{h}$, $q_M = 1$, and $\Delta_M$ varies within a predefined set $\bm{\Delta}_M$, with $\Delta_M(k) = \mathrm{diag}(\delta_k^{[1]} I_{m_1 + m^c_1(k)}, \, \dots, \, \delta_k^{[p]} I_{m_p + m^c_p(k)})$. The steps involved in obtaining the state-space matrices of $M$ from $G$ and $G_c$ are straightforward and, hence, omitted for brevity. The state of $M$ is identical to the state of $G$ because $G_c$ is static; hence, the initial state $x^M_0$ can be expressed as $\Gamma \xi$, where $\Gamma$ is defined in \eqref{eqn_UIC}. Theorem~\ref{thm2} can now be used to compute the robust performance level of the uncertain system $(M,\,\bm{\Delta}_M)$, which approximately captures the behavior of the controlled nonlinear system. 

During  analysis, we characterize the scheduling parameters as rate-bounded SLTV (RB-SLTV) uncertainties. The rate bounds can be estimated based on the parameter trajectories generated using the functions $\Phi$ and $\mu$ from the original state trajectories in the learning dataset. 
A discrete-time, time-varying parameterization of the IQC multipliers provided in \cite{veenman2016robust} for rate-bounded time-varying parametric uncertainties is used to characterize the RB-SLTV uncertainties; see \cite{fry2021robustness} for details.
As in \cite{palframan2017robustness}, we characterize the measurement noise using IQC multipliers for ``banded white" signals from \cite{jonsson1999iqc}. A frequency band of $[-\pi, ~\pi]$ is used to indicate that the noise signals maintain a constant power spectral density across the entire frequency range, similar to white noise. The process noise inputs can be treated as $\ell_2$ signals or characterized using signal IQCs if suitable multipliers can be formulated.

The IQC-based analysis result can also be used to tune the static state-feedback NSLPV controllers for the nonlinear system. We employ a minimization routine based on IQC analysis for this purpose \cite{palframan2017robustness,Fry2019,sinha2025control}. The controller $(G_c,\,\Delta_c)$ is synthesized based on a modified plant $(\bar{G},\,\Delta)$ using Theorem~\ref{col1}, where $\bar{G}$ is derived from $G$ by removing the measurement noise channels. The nominal systems $G$ and $\bar{G}$ also differ in their performance outputs. While the performance output of $G$ can be any linear function of state and input variables of interest as specified by the designer, the performance output of $\bar{G}$ is defined as in \eqref{eqn_perfOut}, with the vector of penalty weights $c$ treated as an optimization variable. Adjustments to these weights alter the nominal system $\bar{G}$ and, consequently, the synthesized controller. In this work, we adopt the IQC-based tuning routine from \cite{Fry2019} to optimize the penalty weights, ensuring that the resulting closed-loop LFT system $(M,\,\bm{\Delta}_M)$ achieves an optimal (albeit locally) robust performance level. Starting with an initial selection of the penalty weight vector $c$, this routine employs a Hessian-based minimization approach combined with a line search strategy to iteratively refine the penalty weights until no significant improvement in the robust performance level $\gamma$ is observed. First, the gradient of $\gamma$ with respect to $c$ is numerically estimated using finite differences, where each component $c_i$ of $c$ is perturbed by $\delta_c$ and the resulting change in $\gamma$ is recorded as the $i^{th}$ element of the gradient vector. The Hessian of $\gamma$ is updated using the Broyden–Fletcher–Goldfarb–Shanno (BFGS) update rule \cite{dennis1996numerical}. Using these quantities, the descent direction $\Vec{p}$ is determined, and a line search within a predefined interval $[0,\,\alpha_{max}]$ is conducted to find the optimal step size $\alpha$ along the descent direction for updating $c$; and so, the updated value of $c$ would be $c + \alpha\Vec{p}$.

\begin{remark} \label{rem1}
    An LTI controller can also be designed for the nonlinear system using the tuning routine. In this case, the synthesis problem will be solved using Corollary~\ref{col2}, based on a nominal LTI system $G_{nom}$. This nominal system can be derived from $(\bar{G},\,\Delta)$ by setting $\Delta = 0$ or through Jacobian linearization of the nonlinear system. Although the controller is designed using an LTI model that may not adequately capture the nonlinear system's dynamics, the analysis guiding the control design process can be conducted using the lifted LPV model, which provides a more accurate representation of the nonlinear system's behavior. 
\end{remark}

Theorem~\ref{thm2} is used to compute the robust performance level of the uncertain system $(M,\,\bm{\Delta}_M)$, which approximately captures the behavior of the controlled nonlinear system. Ideally, we would like to account for the discrepancies between the uncertain model and the original system outputs during the analysis to generate performance guarantees that extend to the nonlinear system. These discrepancies can be treated as uncertainties and modeled in a way amenable to IQC theory, which can handle a wide range of uncertainties. The challenge, however, lies in capturing modeling discrepancies as uncertainties without introducing excessive conservatism in the analysis problem. In our approach, we strive to obtain a lifted LPV approximation of the nonlinear system such that the analysis results based on this lifted model are qualitatively indicative of the behavior of the controlled nonlinear system. For controller tuning based on IQC analysis, such qualitative measures are often adequate, as the objective is not to validate any guarantees at this stage but rather to refine a suboptimal initial controller and ultimately generate a tuned controller with a substantially improved robust performance level.

In our previous work, \cite{sinha2021lft} we have characterized and quantified approximation errors within an IQC framework. Specifically, we modeled the discrepancies between a nonlinear system and its LPV approximation as a norm-bounded, causal, dynamic uncertainty, i.e., a dynamic uncertainty was added to the LPV system to account for the modeling errors. The norm bound on the uncertainty ($\ell_2$-gain of the nonlinear error system) was then estimated using falsification. \cite{annpureddy2011s} However, this approach is not applicable when the nonlinear error system is unstable, as the corresponding $\ell_2$-gain will not be finite. 
Recently, Eyuboglu et al. \cite{eyuboglu2024koopman} incorporated an additive dynamic uncertainty into the Koopman-based lifted model to account for the approximation error under the assumption that the dynamic uncertainty is bounded. This assumption holds for any open-loop stable nonlinear system, provided its lifted approximation also remains stable.
To validate a given controller, the dynamic uncertainty can be added to the controlled system $(M,\,\bm{\Delta}_M)$; see, for example, the case study in our previous work. \cite{sinha2021lft} However, including  an additive uncertainty in $(M,\,\bm{\Delta}_M)$ does not affect the qualitative comparison of robust performance levels across candidate controllers, rendering it ineffective for the controller tuning routine.

The approximation error can also be incorporated as an additive residual signal in the state equation. \cite{strasser2023robust, mamakoukas2022robust}
This signal can be interpreted as a disturbance that is bounded pointwise in time, and the uncertain system can then be analyzed against this disturbance using IQC-based analysis tools. \cite{sinha2025control} However, this characterization tends to produce conservative results in the present setting, as it assumes that the components of the disturbance signal vary independently within fixed upper and lower bounds. Further work is therefore needed to either refine IQC-based characterizations of residual signals or develop alternative approaches for incorporating uncertainty directly into the model.

\section{Illustrative Examples}  \label{sec_examples}

In this section, we apply the proposed approach to design static full-state feedback NSLPV controllers for nonlinear systems based on their lifted LPV approximations. As examples, we consider continuous-time nonlinear dynamics of a 6-DOF UAS and a double pendulum. Discrete-time nonlinear simulations are performed in MATLAB using \texttt{ode23} with some sampling time $\Delta t$. The learning problem is solved using the PyTorch-based deepSI toolbox \cite{beintema2021nonlinear}. The convex optimization problem \eqref{eqn_minVolell} is solved using the CVX toolbox \cite{Grant2014} in MATLAB. The LMIs in the controller synthesis and analysis problems are solved using YALMIP/MOSEK \cite{lofberg2004yalmip, aps2019mosek}. For analysis, RB-SLTV IQC multipliers \cite{veenman2016robust} are selected with a basis function of length $2$ and  poles equal to $0.5$. For the ``banded’’ white signal IQC multipliers \cite{jonsson1999iqc,palframan2017robustness}, a basis function of length $1$ is chosen, and the pole location is set to $0.95$. All the computations are done on a desktop with $32$ GB of RAM and an Intel Xeon E-2224G $3.5$ GHz CPU ($4$ cores). 

In both examples, the functions $\bar{\Phi} : \mathbb{R}^n \to \mathbb{R}^{\bar{N}}$ and $\mu : \mathbb{R}^N \to \mathbb{R}^p$ are parameterized using deep neural networks with two hidden layers, each consisting of $64$ neurons and employing the ELU activation function. For learning purposes, we generate a total of $7000$ trajectories, using $5000$ for training and the remaining $2000$ for validation. The training is done using the Adam optimizer over $1000$ epochs with a batch size of $512$. The dimension $N$ of the lifted state space, number of scheduling parameters $p$, sampling time $\Delta t$, and prediction horizon length $T$ for both examples are given in Table~\ref{tbl_LPVmodels}. When selecting $N$ and $p$, there is a trade-off between model accuracy and complexity. Increasing these values can result in a more accurate LPV approximation but also renders the resulting IQC analysis problem more computationally intensive. The prediction horizon length $T$ significantly impacts the stabilizability of the learned LPV models; specifically, we observed that smaller values of $T$ result in unstabilizable LPV models, while increasing $T$ can effectively circumvent this problem. Through several trials, the loss function weights $\beta_1$, $\beta_2$, and $\rho$ are set to $10^{-4}$, $1$ and $0.9$, respectively. The weights $\beta_1$ and $\beta_2$ are chosen with the objective of learning a compact ellipsoid $\varepsilon(Q)$ that includes the possible initial values of the lifted state without significantly compromising the accuracy of the lifted~model.


\subsection{6-DOF UAS}

The dynamics of the UAS are described by the following nonlinear differential equations:
\begin{equation}
\label{eqn_UASeom}
\dot{\omega} =\texttt{I}^{-1}\big( \bm{m}(u, \textbf{v}_r, \omega)-{\omega}\times \texttt{I}{\omega} \big), \quad \dot{{\bf v}} =  \texttt{m}^{-1}\bm{f}(u, \textbf{v}_r, \omega) + \texttt{g} -  {\omega}\times{\bf v}, \quad \dot{{\lambda}} = E(\phi,\theta)\omega, \quad  \dot{\textbf{p}} = R_{b}^{I}(\lambda)\textbf{v},
\end{equation}
where ${\lambda}= [\phi,\, \theta,\, \psi]^T$ denotes the aircraft's attitude in Euler angles, $\textbf{p} = [X,\,Y,\,Z]^T$ represents its position in the inertial reference frame, $\textbf{v}= [u_b,\,v_b,\,w_b]^T $ and $\omega = [p_b,\,q_b,\,r_b]^T$ denote the linear and angular velocities of the UAS in the body-fixed reference frame, respectively, and $u = [u_E, \, u_A,\, u_R,\, u_T]^T$ is the control input. Here, $u_E, u_A, \text{and } u_R$ are the elevator, aileron, and rudder deflections, respectively, and $u_T$ is the throttle input. The linear velocity of the UAS relative to the wind is given by $\textbf{v}_r = \textbf{v} - \textbf{v}_w$, where $\textbf{v}_w= [u_w,\,v_w,\,w_w]^T $ is the wind velocity expressed in the body frame. The constants $\texttt{I} = \mathrm{diag}(1.32,\,1.57,\,1.87)$, $\texttt{m} = 5.71~\mathrm{kg}$, and $\texttt{g} = 9.81~\mathrm{m/s^2}$ denote the moment of inertia tensor, mass of the aircraft, and acceleration due to gravity, respectively. The net aerodynamic and propulsive forces and moments acting on the aircraft in the body frame are denoted by $\bm{f}(u,\textbf{v}_r,\omega)$ and $\bm{m}(u,\textbf{v}_r,\omega)$, respectively. For details on how these forces and moments are modeled, as well as the definitions of the rotation matrix $R_{b}^I(\lambda)$, the matrix $E(\phi,\theta)$, and the values of the saturation limits, the reader is referred to \cite{muniraj2017path}.

The differential equations are expressed in state-space form with $x = [\omega^T,\, \textbf{v}^T,\, \lambda^T,\, \textbf{p}^T]^T$ as the system state, $u$ as the input, and $\textbf{v}_w$ as the process noise. For simplicity, we do not consider wind in the vertical direction, which is reasonable for low altitude flights. We are interested in controlling the UAS around a circular trajectory, where the trim airspeed is $15~\mathrm{m/s}$ and the radius of curvature is $80~\mathrm{m}$. The trim states and control inputs $(x^\ast,\,u^\ast)$ corresponding to this trajectory are computed by solving a set of nonlinear equations \cite{muniraj2017path}. The trim values for the process noise is zero. We then perform a change of variables and define $(\bar{x},\, \bar{u}) = (x - x^\ast,\, u - u^\ast)$ as the error state and control input of the UAS. The ellipsoid $\varepsilon(P)$ containing all possible values of the initial error state $\bar{x}_0$ is defined by $P = \mathrm{diag}(\pi/6,\,\pi/6,\,\pi/6,\, 2,\, 1,\, 1,\, \pi/9,\,\pi/9,\,\pi/9,\,1,\,1,\,1)^{-2}$.

To generate the error state and control input trajectories, we perform multiple closed-loop nonlinear simulations of the UAS flying along the circular trajectory over a $2~\mathrm{s}$ horizon with $\Delta t = 0.01~\mathrm{s}$. A full-state feedback LQR controller is designed for this purpose based on a linear model obtained by linearizing the nonlinear dynamics around $(x^\ast,\,u^\ast)$ using the Jacobian approach and subsequently discretizing the resulting continuous-time system using the zero-order hold method. During these simulations, the initial error state is randomly sampled from the set $\varepsilon(P)$, and the UAS is subjected to time-varying winds with components pseudorandomly generated from a uniform distribution within the range $[-2,\,2]~\mathrm{m/s}$. Measurement noise is sampled from a normal distribution with a zero mean and a standard deviation of $0.04~\mathrm{rad}$ for attitude, $0.04~\mathrm{rad/s}$ for angular rate measurements, $1~\mathrm{m/s}$ for velocity measurements, and $2~\mathrm{m}$ for position measurements. The disturbances considered in these simulations are not representative of those used for analyzing the lifted LPV model-based controllers or those encountered in real-world scenarios; rather, they are chosen to ensure that the generated data spans a sufficiently large envelope. 

\begin{table}
    \setlength{\tabcolsep}{6pt} 
    \renewcommand{\arraystretch}{1.3}
    \vspace{-0.5cm}
        \caption{Parameters and performance metrics of the lifted LPV models.} \vspace{0.25cm}
    \centering
    \begin{tabular}{c|c|c|c|c|c|c|c}
         & $N$ ($\bar{N}$)& $p$ & $\Delta t$ & $T$ & $\mathcal{L}_{dyn}$ (training) &  $\mathcal{L}_{dyn}$ (validation) & $\sqrt{ \det{Q^{-1}} }$  \\ \hline 
       Pendulum  & $20$ ($16$) & $2$ & $0.02~\mathrm{s}$ & $15$ & $7.47 \times 10^{-6}$ & $7.68 \times 10^{-6}$ & $3.50 \times 10^{-28}$    \\
       6-DOF UAS  & $25$ ($13$) & $2$ & $0.01~\mathrm{s}$  & $5$ & $1.32 \times 10^{-6}$ & $2.89 \times 10^{-6}$ & $1.08 \times 10^{-21}$    \\
    \end{tabular}
    \label{tbl_LPVmodels}
\end{table}

\begin{figure}
     \centering
     \begin{subfigure}[b]{0.48\textwidth}
         \centering
         \includegraphics[width=\textwidth]{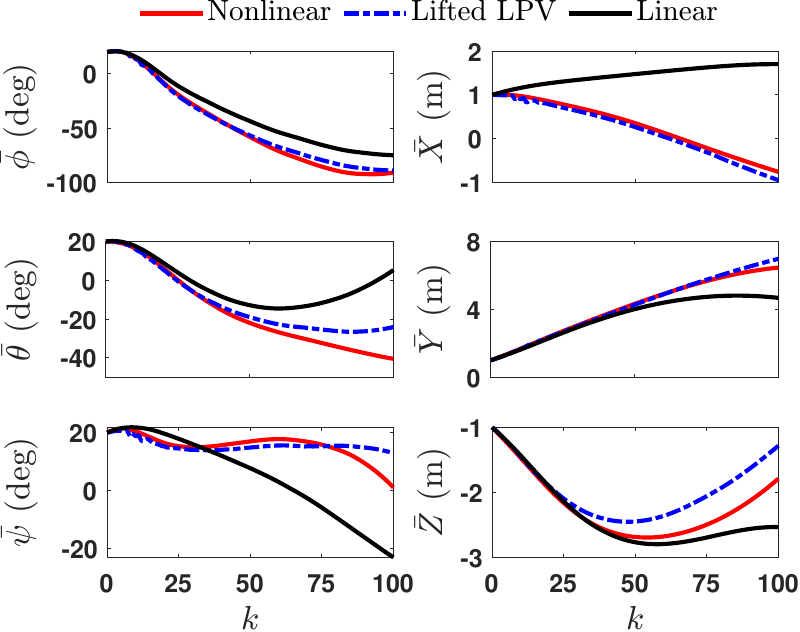}
         \caption{Vehicle Pose.}
         \label{fig_UAScomparison_pose}
     \end{subfigure}
     \hfill
     \begin{subfigure}[b]{0.48\textwidth}
         \centering
         \includegraphics[width=\textwidth]{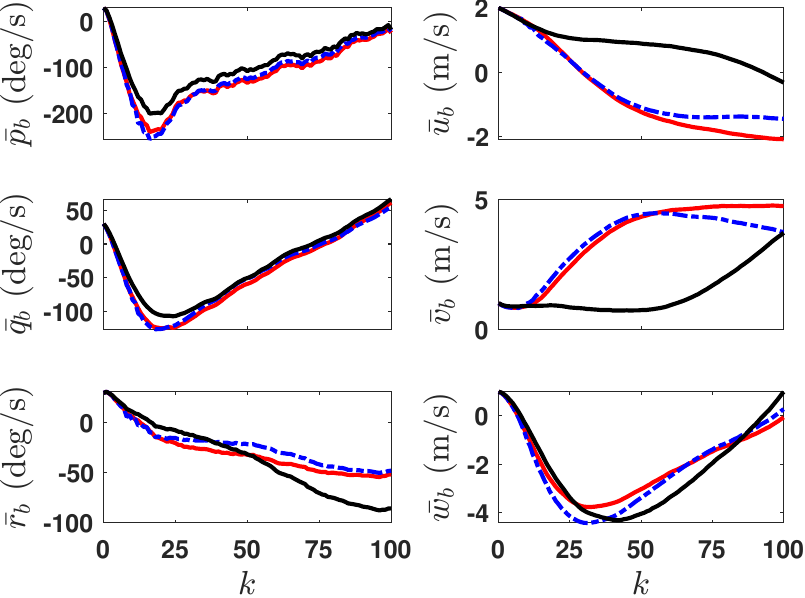}
         \caption{Velocity and Angular Rates.}
         \label{fig_UAScomparison_rates}
     \end{subfigure}
        \caption{State histories of the nonlinear, linearized, and LPV 6-DOF UAS models for the same input.}
        \label{fig_UAScomparison}
\end{figure}

Using the approach described in Section~\ref{sec_learningApproach}, we learn a lifted LPV model to approximate the nonlinear dynamics of the UAS. The details of the learned model, along with the training and validation losses associated with the model approximation, are summarized in Table~\ref{tbl_LPVmodels}. To show that the lifting-based approach generates a more accurate approximation of the nonlinear system, we compare the state trajectories of the nonlinear, linearized, and lifted LPV models under identical control and disturbance input histories and initial conditions. Multiple test cases, consisting of input histories and initial conditions, are generated through closed-loop simulations of the nonlinear model to ensure its states remain bounded. The state trajectories corresponding to a representative test case are shown in Figure~\ref{fig_UAScomparison}. Notably, the LPV model's response closely matches the behavior of the nonlinear system. Since the system is inherently unstable, the responses of the linear approximations eventually diverge from the nonlinear system due to the accumulation of errors over time. However, the lifted LPV system demonstrates a significantly delayed divergence compared to the linearized model. Hence, we expect the optimal controller designed using the lifted LPV model to yield near-optimal performance with the nonlinear system.

Using the IQC-based tuning routine discussed in Section~\ref{sec_iqctuning}, we design various controllers for the UAS. The performance output of $(G,\,\Delta)$ for analysis is defined as in \eqref{eqn_perfOut}, incorporating the error states and control inputs. The penalty weight vector is set to $\mathbf{1}$. Although this selection implies equal prioritization of all errors, a detailed examination of the scales of the UAS state and input variables reveals that position errors receive higher implicit weighting due to their larger magnitudes. For control synthesis, the penalty weight vector of the plant $(\bar{G},\,\Delta)$ is treated as an optimization variable and initialized with a value of $\mathbf{1}$. 
For synthesis, the upper bound on the $\ell_2$-norm of the disturbance $d$ in $(\bar{G},\,\Delta)$, consisting solely of process noise, is set to $50$. For closed-loop analysis of $(G,\,\Delta)$, this value is increased to $150$ to account for the inclusion of measurement noise. The standard deviation values for the original state measurements are chosen as $0.1$ for attitude and angular rate, and $0.5$ for position and velocity, with units as previously defined. The standard deviation values for the lifted state measurements, along with the bounds and rate-bounds for the scheduling parameters, are estimated using data. In the tuning routine, the perturbation $\delta_c$ for gradient computation is set to $0.05$. The line search is conducted over $20$ points with the interval boundary $\alpha_{max}$ set to $0.5$. The tuning routine is stopped when the relative decrease in the value of $\gamma$ over an iteration is less than $0.5\,\%$. To simplify the tuning process, we tune a standard LPV controller by setting $\bar{h} = 0$ in Theorem~\ref{col1}. Once the final tuned penalty weights are obtained, the synthesis is repeated with the same penalty weights but an increased horizon $\bar{h}$ to generate NSLPV controllers. The horizon is incrementally increased until no significant improvement in performance is observed. In this example, increasing $\bar{h}$ from $0$ to $2$ improved the robust performance level of the uncertain system by approximately $14\,\%$. To highlight the advantages of implementing an LPV controller, we also tune an LTI controller synthesized based on a nominal system $G_{nom}$, derived from $(\bar{G},\,\Delta)$, as discussed in Remark~\ref{rem1}. The final tuned penalty weights for both the LPV and LTI controllers are given in Table~\ref{tbl_tunedPenalties}. The ``normalized'' robust performance levels $\bar{\gamma}_{IQC}$, which we define as $\gamma/\lVert d \rVert_{\ell_2}$, obtained from IQC analysis are $2.08$, $1.72$, and $1.49$  for the LTI, LPV, and NSLPV controllers, respectively. 

\begin{table*}[!t]
\renewcommand{\arraystretch}{1.0} 
\setlength{\tabcolsep}{3.5pt}
\caption{The initial and the final (tuned) penalty weights for the lifted model-based UAS controllers.}
\centering
\begin{tabular}{|c|c|c|c|c|c|c|c|c|c|c|c|c|c|c|c|c|}
\hline
 & $c_p$ & $c_q$ & $c_r$ & $c_u$ & $c_v$ & $c_w$ & $c_\phi$ & $c_\theta$ & $c_\psi$ & $c_X$ & $c_Y$ & $c_Z$ & $c_{{u}_E}$ & $c_{{u}_A}$ & $c_{{u}_R}$ & $c_{{u}_T}$  \\
\hline
\hline
{Initial} & $1$ & $1$ & $1$ & $1$ & $1$ & $1$ & $1$ & $1$ & $1$ & $1$ & $1$ & $1$ & $1$ & $1$ & $1$ & $1$   \\
\hline
{Final (LPV)} & $0.99$ & $0.98$ & $0.99$ & $1.00$ & $1.05$ & $0.98$ & $0.98$ & $0.98$ & $0.99$ & $0.98$ & $0.99$ & $0.99$ & $1.04$ & $1.00$ & $0.98$ & $0.95$   \\
\hline
{Final (LTI)} & $1.09$ & $1.05$ & $1.02$ & $0.99$ & $0.96$ & $1.04$ & $1.04$ & $1.00$ & $1.00$ & $1.01$ & $0.89$ & $0.93$ & $1.02$ & $1.03$ & $1.01$ & $1.01$   \\
\hline
\end{tabular}
\label{tbl_tunedPenalties}
\end{table*}

\begin{figure}[t]
     \centering
     \begin{subfigure}[b]{0.49\textwidth}
         \centering
         \includegraphics[width=\textwidth]{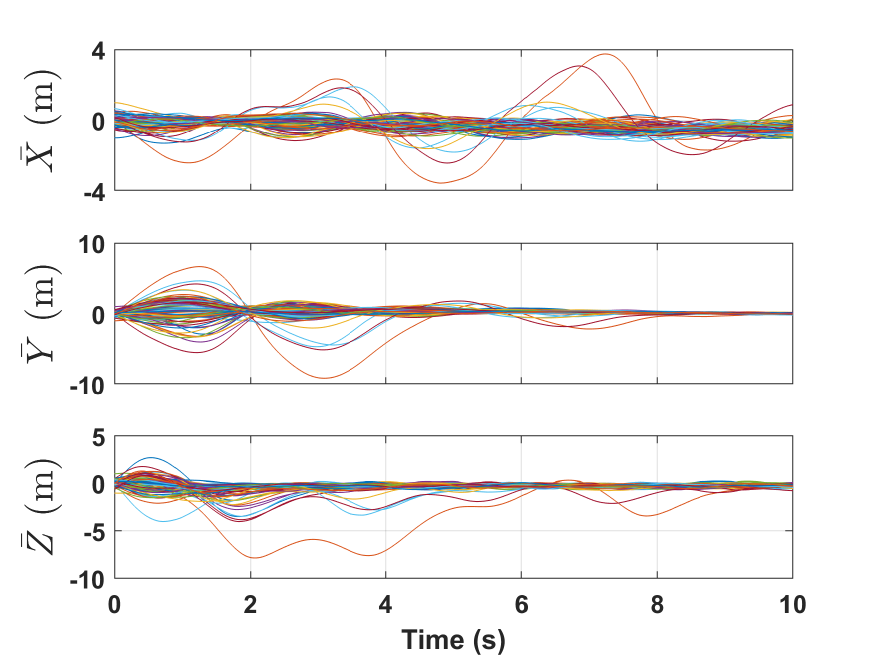}
         \caption{Linearized model-based LTI controller ($\bar{\gamma}_{sim} = 1.65$).} \vspace{0.4cm}
         \label{fig_UASCLsim_LTILinearizedCont}
     \end{subfigure}
     \hfill
     \begin{subfigure}[b]{0.49\textwidth}
         \centering
         \includegraphics[width=\textwidth]{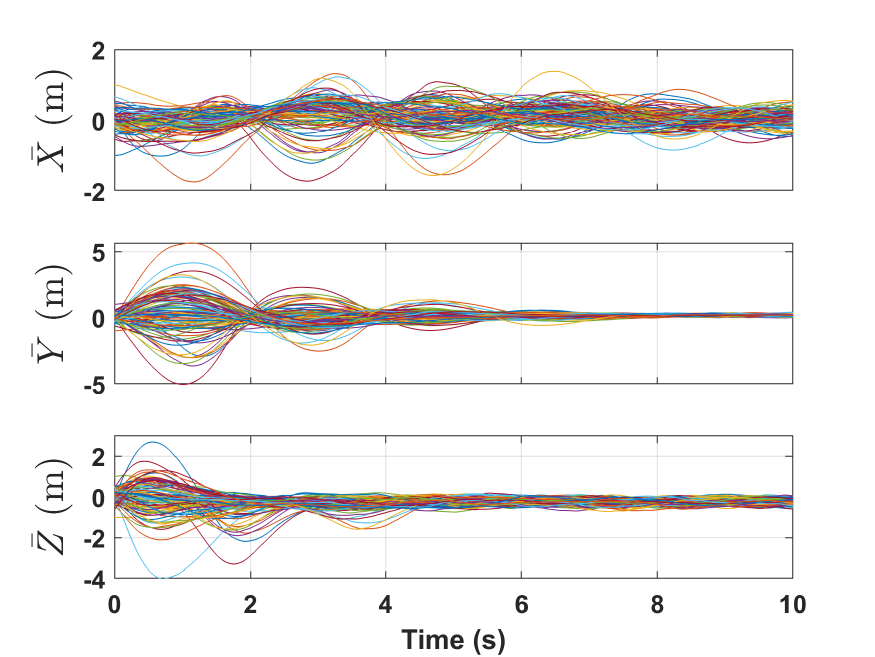}
         \caption{Lifted model-based LTI controller ($\bar{\gamma}_{sim} = 0.82$).} \vspace{0.4cm}
         \label{fig_UASCLsim_LTIKoopCont}
     \end{subfigure}
     \hfill
     \begin{subfigure}[b]{0.49\textwidth}
         \centering
         \includegraphics[width=\textwidth]{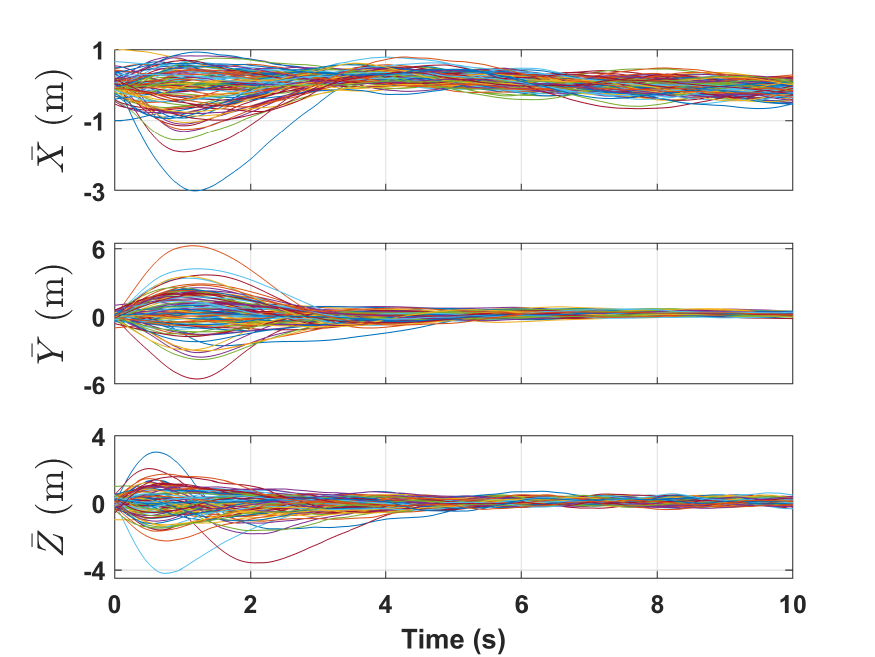}
         \caption{Lifted model-based LPV controller ($\bar{\gamma}_{sim} = 0.64$).} \vspace{0.2cm}
         \label{fig_UASCLsim_LPVKoopCont}
     \end{subfigure}
     \hfill
     \begin{subfigure}[b]{0.49\textwidth}
         \centering
         \includegraphics[width=\textwidth]{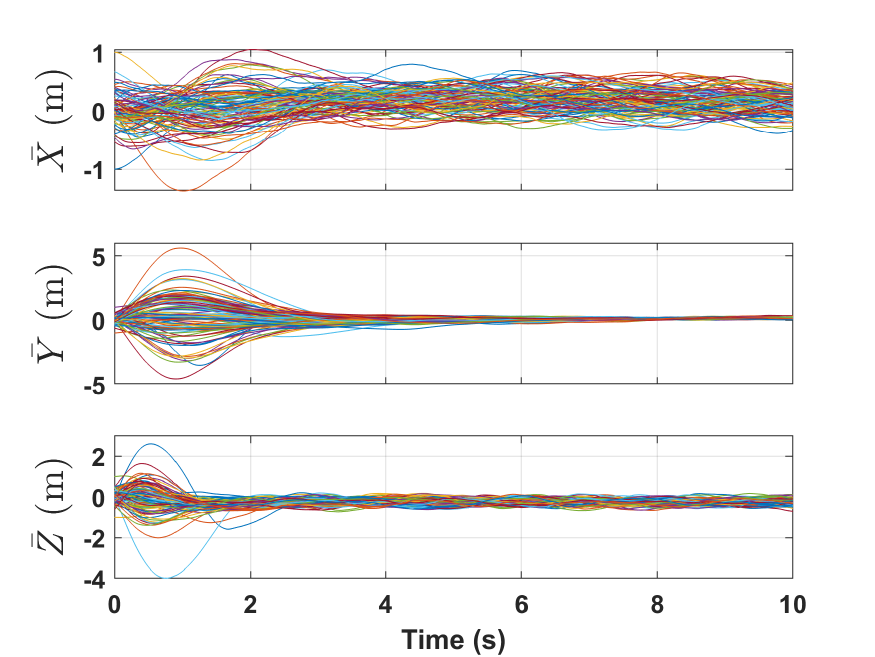}
         \caption{Lifted model-based NSLPV controller ($\bar{\gamma}_{sim} = 0.56$).} \vspace{0.2cm}
         \label{fig_UASCLsim_NSLPVKoopCont}
     \end{subfigure}
        \caption{Simulations of the 6-DOF nonlinear UAS model with the designed  full-state feedback controllers.}
        \label{fig_UASCLsim}
\end{figure}

For comparison, we also synthesize an LTI controller based on the linearized model. The initial state of this model is $x_0$, which lies in the set $\varepsilon(P)$. Hence, we modify the synthesis conditions in Corollary~\ref{col2}, as these conditions are formulated for LTI systems with initial state values lying in a set defined by two separate ellipsoids. The modified conditions are obtained by setting $\Gamma = P^{-1/2}$, $F_1 = f_{11} I_n$, and $f_{12} = 0$ in \eqref{eqn_LTISFsynthesis1}. During tuning, we conduct the analysis directly on the linearized model. In this case, the LFT interconnection $(G,\,\Delta)$ is essentially an LTI system with no scheduling parameters, i.e., $\Delta = 0$. However, the final controller obtained from this tuning fails in simulations, despite the analysis predicting a finite robust performance level. This outcome is not surprising, as the linearized model does not provide a sufficiently accurate approximation of the nonlinear system in the considered envelope, rendering the analysis inadequate for certifying the controller. To address this, we manually tune the penalty weights for the linearized plant. The manual tuning involves iteratively adjusting the penalty weights based on observations from closed-loop nonlinear simulations. The manually tuned weights are then used as the initial choice for the tuning routine, which yields a final controller with $\bar{\gamma}_{IQC}$ = 0.64.
We also attempt to tune the linearized model-based synthesis utilizing the lifted LPV model for analysis within the tuning routine. In this case, the measurement output of $(G,\,\Delta)$, as defined in \eqref{eqn_measOut}, is modified to $\hat{y}_k = \hat{x}_k + Ww_k = Cz_k +  Ww_k$, where the additive term represents the measurement noise associated with the original state. However, during gradient computation and line search within the tuning routine, the controllers synthesized based on the linearized plant frequently lead to an infeasible analysis problem or provide robust performance levels that are large and not useful, rendering the tuning ineffective. This issue arises partly from the limited accuracy of the linearized model and partly from the limitations of the tuning routine, which assumes that the closed-loop uncertain systems used for analysis are robustly stable.

We compare the closed-loop responses associated with the tuned controllers through multiple nonlinear simulations of the 6-DOF UAS, with the initial error state sampled from the set $\varepsilon(P)$. The UAS is subjected to time-varying winds, where each wind component lies within the range $[-4,\,4]$, resulting in a maximum wind speed of approximately $5.66~\mathrm{m/s}$. The measurement noise is generated using MATLAB function \texttt{randn} with the standard deviations defined previously. Identical initial conditions and disturbance sequences are used for all controllers to ensure a fair comparison. The simulations are conducted over $10~\mathrm{s}$, and we define the simulation-based robust performance level as
\begin{equation}
    \bar{\gamma}_{sim}^{~2} = \max{ \left(\sum_{k=0}^{N_k-1} e_k^Te_k \right) / \left(\sum_{k=0}^{N_k-1} d_k^Td_k \right) },
    \label{eqn_gammasim}
\end{equation}
where $N_k$ denotes the total number of time-steps in each simulation, $e_k = [\bar{x}_k^T ~~ \bar{u}_k^T]^T$ is the performance output, and the disturbance $d$ consists of wind velocity and ``normalized" unit-variance measurement noise. A few time histories of the position error, obtained from simulations, corresponding to all tuned controllers are shown in Figure~\ref{fig_UASCLsim}. The time histories of the other state variables are omitted for brevity, as they exhibit similar trends. The $\bar{\gamma}_{sim}$ value for the LTI controllers designed based on the linearized and lifted LPV models are $1.65$ and $0.82$, respectively. The robust performance levels for the LPV and the $(2,\,1)$-eventually periodic NSLPV controllers are $0.64$ and $0.56$, respectively. Notably, using the lifted LPV model for control design and analysis results in a controller with a robust performance level approximately three times smaller than that achieved by the  controller synthesized using the linearized model, underscoring the advantages of the proposed approach. We can also observe that, for the controllers designed using the lifted LPV model, the simulation-based robust performance level is smaller than the worst-case bound computed via IQC analysis. Moreover, the IQC analysis results align with nonlinear simulations when comparing the performance of the three controllers. In contrast, these observations do not hold for the LTI controller synthesized using the linearized model, which is expected given the significant discrepancies between the nonlinear and linearized dynamics.

\subsection{Double Pendulum}

The equations of motion for the double pendulum shown in Figure~\ref{fig_Pend} are given by
\begin{align*}
    \begin{split}
        \tilde{\theta} \ddot{q}_1 =& \ \theta_2(\tau + \theta_3 \dot{q}_2^2 \sin{q_2} + 2 \theta_3 \dot{q}_1\dot{q}_2\sin{q_2} 
                      - \theta_4 \texttt{g} \cos{q_1}  - \theta_5 \texttt{g} \cos{(q_1+q_2)})  \\
                     &  + (\theta_3 \dot{q}_1^2 \sin{q_2} + \theta_5 \texttt{g} \cos{(q_1+q_2)})(\theta_2 + \theta_3 \cos{q_2}),
    \end{split} \\
    \begin{split}
        \tilde{\theta} \ddot{q}_2 =&  - (\theta_2 + \theta_3 \cos{q_2})(\tau + \theta_3 \dot{q}_2^2 \sin{q_2} + 2 \theta_3 \dot{q}_1\dot{q}_2\sin{q_2}  
                            - \theta_4 \texttt{g} \cos{q_1}  - \theta_5 \texttt{g} \cos{(q_1+q_2)}) \\ & - (\theta_1 + \theta_2 + 2\theta_3 \cos{q_2}) (\theta_3 \dot{q}_1^2 \sin{q_2} + \theta_5 \texttt{g} \cos{(q_1+q_2)}),
    \end{split}
\end{align*}
where $\tau$ is the applied torque, $\tilde{\theta} = \theta_2(\theta_1 + \theta_2 + 2\theta_3 \cos{q_2})$, and $\theta_i$, for $i = 1,\dots,5$, are constants dependent on the masses, lengths, and inertia of the two links in the pendulum. The values for these constants are: $\theta_1 = 0.0308$, $\theta_2 = 0.0106$, $\theta_3 = 0.0095$, $\theta_4 = 0.2097$, and $\theta_5 = 0.0634$. The saturation limit for the torque is set to $5~\mathrm{N\text{-}m}$. The dynamics are written in state-space form with $x = (q_1,\,q_2,\,\dot{q}_1,\,\dot{q_2})$ as the state and $u = \tau$ as the input. We are interested in controlling the pendulum about an equilibrium $(x^\ast,\,u^\ast)$, where $x^\ast = (\pi/2,\,0,\,0,\,0)$ and $u^\ast = 0$. The error state and control input are defined as $(\bar{x},\, \bar{u}) = (x - x^\ast,\, u - u^\ast)$. The ellipsoid $\varepsilon(P)$ is defined by $P = \mathrm{diag}(\pi/12,\, \pi/12,\, \pi/18,\, \pi/18)^{-2}$. 

\begin{figure}[t]
\begin{minipage}[b]{0.42\textwidth}
    \centering
    \includegraphics[width=\linewidth]{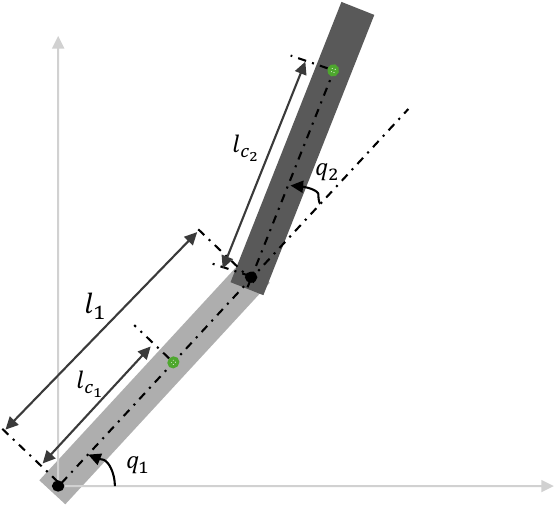}
    \captionof{figure}{Double pendulum.}
    \label{fig_Pend}
\end{minipage}
\hfill
\begin{minipage}[b]{0.5\textwidth}
    \centering
    \includegraphics[width=\linewidth]{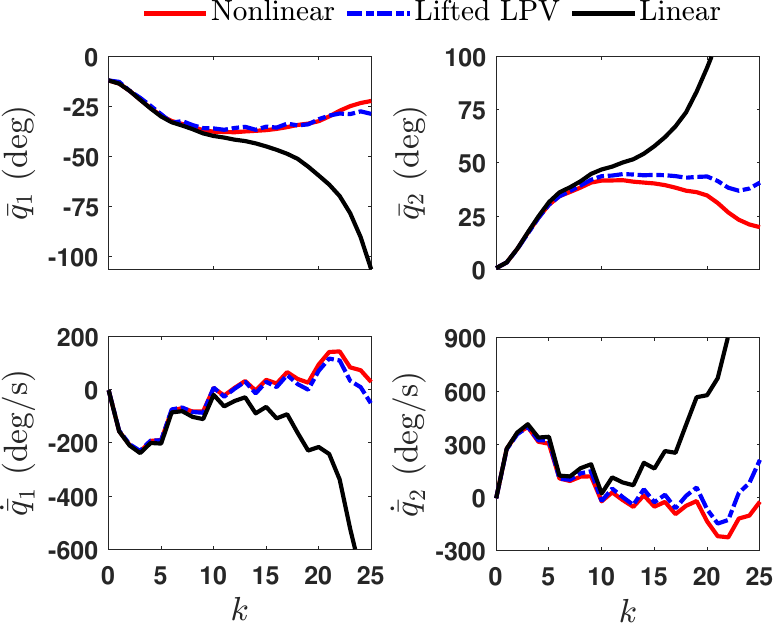}
    \captionof{figure}{State histories of the nonlinear, linearized, and LPV double pendulum models for the same input.}
    \label{fig_DPcomparison}
\end{minipage}
\end{figure}

For data generation, we perform closed-loop nonlinear simulations over a $1~\mathrm{s}$ horizon with $\Delta t = 0.02~\mathrm{s}$. An LQR controller is used to stabilize the system around the equilibrium. During simulations, the initial error state is randomly sampled from the set $\varepsilon(P)$, and state measurements are corrupted with additive white Gaussian noise, with a standard deviation of $0.02$ for each measurement noise. We also consider process noise, sampled from a uniform distribution within the range $[-0.5,\,0.5]$, which perturbs the input commands generated by the controller. During simulations, we observed that the LQR controller, designed using the linearized model, failed to stabilize the system for certain values of $\bar{x}_0$ near the boundary of $\varepsilon(P)$. These simulation results were discarded from the dataset. To include trajectories originating from such initial state values in the learning dataset, we employed a switched control approach. In this method, the reference trajectory was generated by controlling the system about a sequence of equilibrium points to drive the double pendulum from  an ``extreme'' initial state  to the target equilibrium $(x^\ast,\,u^\ast)$. For each equilibrium point, a separate LQR controller was designed to stabilize the system locally around that point. 
Since the process noise is added directly to the control input, the input $\tilde{u}_k$ of the LPV model \eqref{eqn_Koopmansys} is defined as $\bar{u}_k + v_k$ rather than $[\bar{u}_k^T, ~ v_k^T]^T$ to simplify the learning process. Table~\ref{tbl_LPVmodels} summarizes the details of the learned LPV model, while Figure~\ref{fig_DPcomparison} compares the state trajectories of the nonlinear, linearized, and lifted LPV models under identical input histories and initial conditions. In this specific case, a minimal LFT representation of the form \eqref{eqn_LFTkoop} can be derived, with $m_i = n_u = n_v$ for all $i \in \mathbb{N}_p$, $B_{1s} = \begin{bmatrix}0_{N \times n_w} & B_{0}\end{bmatrix}$, $B_{2s} = B_{0}$, $B_{1p} = \bm{1}_p \otimes \begin{bmatrix}0_{n_v \times n_w} & I_{n_v}\end{bmatrix}$, and $B_{2p} = \bm{1}_p \otimes I_{n_u}$.

For controller tuning, the penalty weight vector associated with the performance output of $(G,\,\Delta)$ is set to $\mathbf{1}$. The penalty weight vector of the plant $(\bar{G},\,\Delta)$ is initialized with a value of $\mathbf{1}$. For synthesis, the upper bound on the $\ell_2$-norm of $d$ (process noise) is set to $10$, and for analysis with both process noise and measurement noise, it is increased to $50$. The noise characteristics in the original state space are the same as those used for generating the learning dataset. The parameters in the tuning routine are identical to those used in the UAS example. Using this tuning routine, a standard LPV controller is designed for the double pendulum. The controller achieves a $\bar{\gamma}_{IQC}$ value of $3.64$. In this example, increasing the horizon $\bar{h}$ to generate NSLPV controllers did not improve the robust performance level. The final tuned penalty weight vector for the LPV controller is $[1.05,\,1.00,\,1.07,\,0.76,\,1.08]^T$. An LTI controller is also tuned based on the linearized model, starting with $c = \mathbf{1}$ as the initial choice for the penalty weight vector. The resulting tuned vector is $[1.01,\,1.00,\,0.73,\,1.07,\,0.57]^T$. However, this controller fails in some simulations. Unlike the UAS example, manual tuning did not resolve this issue, highlighting the limitations of the linearized model in this scenario. Additionally, we attempted to tune an LTI controller based on the lifted LPV model, as discussed in Remark~\ref{rem1}. However, the tuning routine failed to converge to a controller with a small robust performance level comparable to that achieved by the LPV controller. Note that  IQC analysis is conducted on the lifted LPV model, which is an approximation of the nonlinear system within the considered operating envelope. Therefore, the tuned controllers must ensure that this envelope is not violated. The large robust performance level values suggest that the controller may lead to a closed-loop response that is not contained within the operating envelope where the lifted LPV model is valid, rendering the analysis results unreliable. Simulations further reveal that this LTI controller fails, highlighting the inadequacy of the LTI control approach for the double pendulum within the considered operating envelope. 

\begin{figure}[t]
         \centering
         \includegraphics[width=0.6\textwidth]{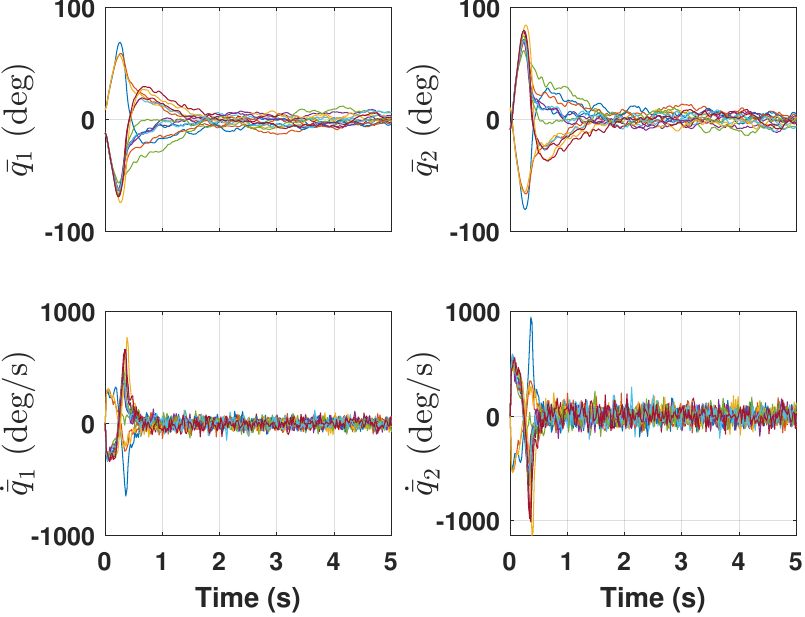}
         \caption{Simulations of the double pendulum with the full-state feedback LPV controller.}
         \label{fig_DPCLsim_LPV}
\end{figure}

\begin{figure}[t]
     \centering
     \begin{subfigure}[b]{0.49\textwidth}
         \centering
         \includegraphics[width=\textwidth]{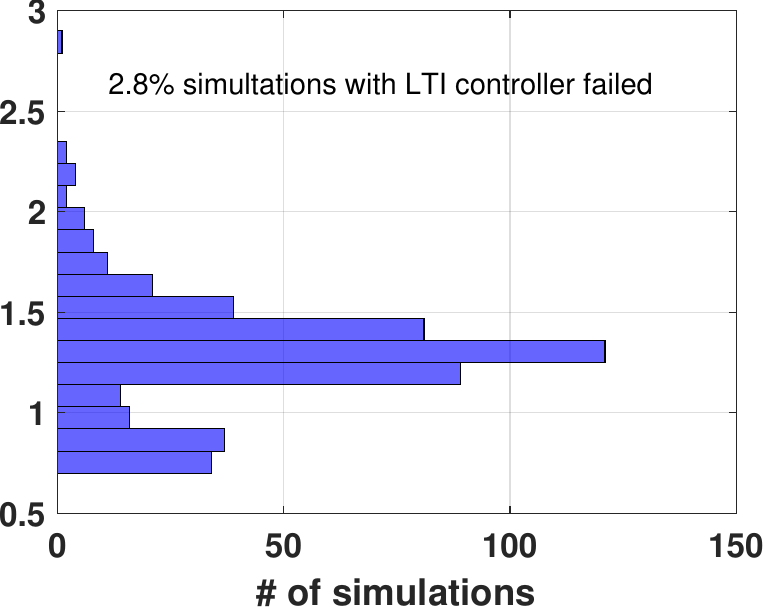}
         \caption{Linearized model-based LTI controller ($\bar{\gamma}_{sim} = 2.90$).}
         \label{fig_DPCLsim_LTIhist}
     \end{subfigure}
     \hfill
     \begin{subfigure}[b]{0.49\textwidth}
         \centering
         \includegraphics[width=\textwidth]{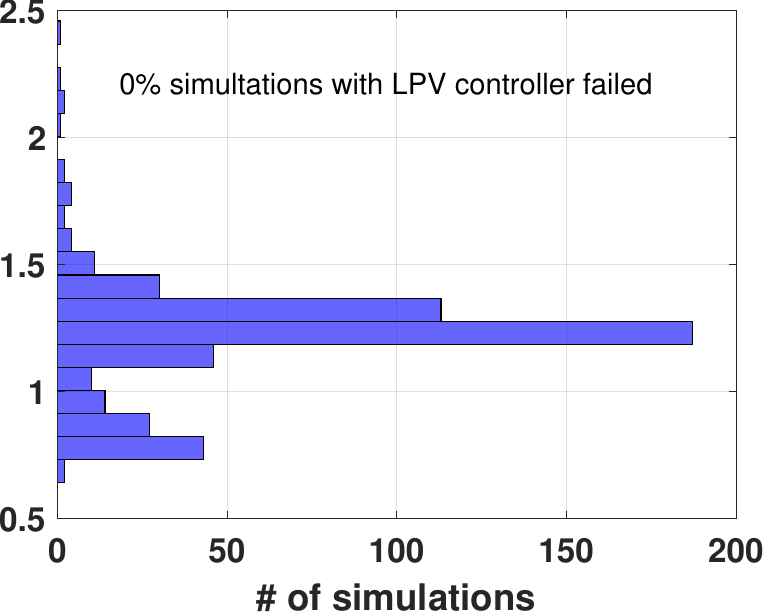}
         \caption{Lifted model-based LPV controller ($\bar{\gamma}_{sim} = 2.45$).}
         \label{fig_DPCLsim_LPVhist}
     \end{subfigure}
        \caption{Distributions of input-output norm ratios computed from simulations (excluding failed runs of LTI controller).}
        \label{fig_DPCLsim}
\end{figure}

The performance of each  controller is evaluated through closed-loop simulations of the double pendulum. A total of $500$ simulations over a $5~\mathrm{s}$ horizon are conducted for both LTI and LPV controllers under identical initial conditions and disturbances. The initial error state is sampled from $\varepsilon(P)$, and the process and measurement noise characteristics match those used during data generation. The linearized model-based LTI controller fails in $14$ out of $500$ simulations, while the lifted model-based LPV controller succeeds in all simulations. Figure~\ref{fig_DPCLsim_LPV} shows the nonlinear system response with the LPV controller for the test cases in which the LTI controller fails. For each successful simulation, the input-output norm ratio, as defined in \eqref{eqn_gammasim}, is also computed. The distribution of these ratios is shown in Figure~\ref{fig_DPCLsim}. The maximum input-output norm ratio ($\bar{\gamma}_{sim}$) comes out to be $2.90$ for the LTI controller and $2.45$ for the LPV controller, highlighting the benefits of the lifting-based LPV control approach.

\section{Limitations} \label{sec_limitations}

As with most data-driven modeling techniques, it is difficult to claim that the learned LPV model will approximate the behavior of the nonlinear system reasonably well. The model’s accuracy heavily relies on the diversity of the dataset within the envelope of interest, as well as the appropriate selection of learning hyperparameters and model architecture. If the nonlinear dynamics of the system are known, the dataset can be efficiently generated by simulating the system under a wide range of conditions. Additionally, hyperparameter search can be accelerated through parallel computation, and transfer learning techniques can further expedite the process by leveraging knowledge from related models.
One significant limitation of the lifting-based approaches is the curse of dimensionality. While simplifying the synthesis to generate static state-feedback controllers reduces the complexity of both the synthesis and controller implementation, the IQC analysis required for controller tuning can become computationally expensive as the dimension of the lifted model increases. Additionally, the complexity of the analysis grows with the number of inputs, as this directly influences the size of the perturbation operator. Finally, since the lifted LPV models are approximations, the robust performance guarantees derived for these models cannot be directly transferred to the nonlinear systems without accounting for the approximation errors.
{ Furthermore, the upper and lower bounds on the scheduling parameters, calculated from data, may constitute optimistic estimates of the true parameter bounds, thus affecting the outcomes of the robustness analysis.}

\section{Conclusion} \label{sec_conclusions}

This paper introduces a robust control design and analysis framework for nonlinear systems with uncertain initial conditions. Leveraging a lifting linearization technique, we propose a deep learning-based approach to learn LPV approximations of nonlinear systems in higher-dimensional spaces while simultaneously characterizing the uncertain initial states within the lifted state space. We further present a convex synthesis and IQC-based analysis approach for a fairly general class of NSLPV systems with uncertain initial conditions. Assuming the states are exactly measurable, the synthesis approach is simplified to generate static state-feedback controllers for the nonlinear system. The IQC analysis framework is then employed to tune these controllers, while also accounting for measurement noise. Through applications to a double pendulum and a 6-DOF UAS, we demonstrate that the lifted LPV models provide a more accurate approximation of the nonlinear systems over a larger envelope compared to linear models obtained through traditional linearization techniques. Additionally, we show that the controllers designed based on the lifted LPV model outperform those designed using linearized models. Future work will focus on addressing the errors introduced by the lifting-based LPV approximation to ensure that the robust performance guarantees derived from IQC analysis always hold for the nonlinear system within the envelope of interest.


\section*{Acknowledgments}
This material is based upon work supported by the Army Research Office under Grant Number W911NF-21-1-0250.

\bibliographystyle{unsrtnat}
\bibliography{references}


\appendix
\renewcommand{\thesection}{\Alph{section}.}
\section{Dynamic Output Feedback Synthesis} \label{app_A}

Here, we present a result on dynamic output feedback controller synthesis. Theorem~\ref{thm1} addresses a more general class of eventually periodic LFT plants with uncertain initial conditions. 
\vskip 0.1in
\begin{thm} \label{thm1}
Consider an $(h,\,q)$-eventually periodic uncertain plant $(G,\,\bm{\Delta}) = \{ (G,\,\Delta) ~|~ \Delta \in \bm{\Delta} \}$, where the interconnection $(G,\,\Delta)$  is defined as in \eqref{eqn_LFTkoop}. Suppose that the initial state $z_0$ of the plant is uncertain and can be expressed as $z_0 = \Gamma \xi$, where $\Gamma \in \mathbb{R}^{N \times s}$ and $\xi = (\xi_1, \, \dots, \, \xi_a)$, with 
$\xi_i \in \mathbb{R}^{s_i} $ and $\sum_{i=1}^{a} s_i = s \leq N$. Given some $\bar{h} \geq h$, an $(\bar{h},\,q)$-eventually periodic synthesis $(G^c,\, \bm{\Delta}^c)$ exists for this plant leading to an uncertain closed-loop system satisfying the inequality 
\begin{equation}
    \sup{\{\lVert e \rVert_{\ell_2} \mid \lVert \xi_1 \rVert_{2} \leq 1, \dots, \lVert \xi_a \rVert_{2} \leq 1, \lVert d \rVert_{\ell_2} \leq 1, \Delta \in \bm{\Delta} \}}< \gamma 
    \label{eqn_perfinequality2}
\end{equation}
 if there exist positive definite matrices $X_0(k) \in \mathbb{S}^{N}$, $X_i(k) \in \mathbb{S}^{m_i}$ for $i \in \mathbb{N}_p$, $X = R,\, S$, and $k = 0,\,1,\,\dots,\,\bar{h}+q-1$, and positive scalars $b$, $f_{11}$, $f_{12}$, $\dots$, $f_{1a}$, $f_{2}$, $g$, and $t$ such that
\begin{align}
\begin{split}
  & b + \sum_{i=1}^{a} f_{1i} + f_2 < 2\gamma, \quad \Gamma^T S_0(0) \Gamma \prec F_1,
\end{split} \label{eqn_thm1_UIC}
\\[2ex]
 &N_R(k)^T \left\{ H(k) \begin{bmatrix} R_0(k) & 0 & 0 \\ 0 & \bar{R}(k) & 0 \\ 0 & 0 & g I \end{bmatrix} H(k)^T - 
        \begin{bmatrix} R_0(k+1) & 0 & 0 \\ 0 & \bar{R}(k) & 0 \\ 0 & 0 & b I \end{bmatrix} \right\} N_R(k) \prec 0, \label{eqn_thm1_FSC}\\[2ex]
 &N_S(k)^T \left\{ H(k)^T \begin{bmatrix} S_0(k+1) & 0 & 0 \\ 0 & \bar{S}(k) & 0 \\ 0 & 0 & t I \end{bmatrix} H(k) -  
          \begin{bmatrix} S_0(k) & 0 & 0 \\ 0 & \bar{S}(k) & 0 \\ 0 & 0 & f_2 I \end{bmatrix} \right\} N_S(k) \prec 0, \label{eqn_thm1_BSC}\\[2ex]
& \begin{bmatrix} R_i(k) & I \\ I & S_i(k)\end{bmatrix} \succeq 0 , \quad  \begin{bmatrix} g & 1 \\ 1 & f_2 \end{bmatrix} \succeq 0 , \quad  \begin{bmatrix} t & 1 \\ 1 & b \end{bmatrix} \succeq 0, 
\label{eqn_thm1_coupling}
\end{align}
for $i = 0,\,1,\,\dots,\,p$ and  $k = 0,\,1,\,\dots,\,\bar{h}+q-1$, where
$R_0(\bar{h}+q) = R_0(\bar{h})$, $S_0(\bar{h}+q) = S_0(\bar{h})$,
\begin{equation*}
\begin{split}
\bar{R}(k) &= \mathrm{diag}(R_1(k),\, R_2(k),\, \dots,\, R_p(k)),  \\[2pt]
\bar{S}(k) &= \mathrm{diag}(S_1(k),\, S_2(k),\, \dots,\, S_p(k)), \\[2pt]
\mathrm{Im}\, N_R(k) &= \mathrm{Ker} \begin{bmatrix} B_{2s}(k)^T & B_{2p}(k)^T & D_{12}(k)^T \end{bmatrix},\\[2pt]
\mathrm{Im}\, N_S(k) &= \mathrm{Ker} \begin{bmatrix} C_{2s}(k) & C_{2p}(k) & D_{21}(k) \end{bmatrix}, 
\end{split}  \qquad
\begin{split}
    &H(k) = \begin{bmatrix}
    A_{ss}(k) & A_{sp}(k) & B_{1s}(k) \\ A_{ps}(k) & A_{pp}(k) & B_{1p}(k) \\ C_{1s}(k) & C_{1p}(k) & D_{11}(k) 
\end{bmatrix}, \\[2pt]
    &F_1 = \mathrm{diag}(f_{11}I_{s_1},\, f_{12}I_{s_2},\, \dots,\, f_{1a}I_{s_a}), \\[2pt]
    &N_R(k)^TN_R(k) = I, \quad N_S(k)^TN_S(k) = I.
\end{split}
\end{equation*}
\end{thm}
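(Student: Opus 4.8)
\textbf{Proof proposal for Theorem~\ref{thm1}.}

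The plan is to follow the standard LFT/LMI synthesis machinery for non-stationary (eventually periodic) systems — essentially a ``full-information'' scaled bounded-real argument adapted to the uncertain-initial-condition setting — and to track carefully how the structured initial condition $z_0 = \Gamma\xi$ with the block partition $\xi = (\xi_1,\dots,\xi_a)$ enters the analysis. I would begin by recalling the closed-loop characterization: with a controller $(G^c,\bm{\Delta}^c)$ of the assumed eventually periodic structure, the interconnection admits a dissipation-inequality / IQC-type certificate governed by storage matrices $Y_0(k)\succ 0$ on the closed-loop state together with D-G scaling matrices for the repeated parameter blocks $\Delta(k)$. The key device is to introduce, for the scaling associated with each $\delta^{[i]}$, a pair $(R_i(k),S_i(k))$ coupled by $\begin{bmatrix} R_i(k) & I \\ I & S_i(k)\end{bmatrix}\succeq 0$, which is exactly the relaxation that convexifies the product of the controller and scaling unknowns; the scalars $b,f_{1i},f_2,g,t$ play the role of the multipliers on the disturbance channel $d$, the performance channel $e$, and the initial-condition channel, respectively.

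Concretely, I would: (1) write the closed-loop scaled bounded-real inequality as a single quadratic form in $(z_k,\vartheta_k,d_k)$ summed over $k$, using the repeated-block scalings to absorb the parameter uncertainty (this yields the role of $g,b,t,f_2$ as in the proof of Theorem~\ref{thm2}, where the same pattern $g\ge t^{-1}$, $b$ paired with $t$, etc., appears); (2) apply the projection (elimination) lemma to remove the controller variables, which splits the single matrix inequality into the two complementary inequalities \eqref{eqn_thm1_FSC} and \eqref{eqn_thm1_BSC} projected onto $\operatorname{Im} N_R(k)$ and $\operatorname{Im} N_S(k)$, together with the coupling conditions \eqref{eqn_thm1_coupling} that link $R_0(k)\leftrightarrow S_0(k)$ and $R_i(k)\leftrightarrow S_i(k)$ — this is where Lemma~\ref{lemma1} is the workhorse for the $i=0$ block; (3) handle the eventually periodic structure by imposing the boundary conditions $R_0(\bar h+q)=R_0(\bar h)$, $S_0(\bar h+q)=S_0(\bar h)$, and arguing, exactly as for LTV eventually periodic systems, that a finite set of LMIs over $k=0,\dots,\bar h+q-1$ suffices because the system matrices repeat past step $\bar h$; (4) finally, account for the initial condition: the term $(x_0^{\text{cl}})^T Y_0(0) x_0^{\text{cl}} = z_0^T S_0(0) z_0 = \xi^T \Gamma^T S_0(0)\Gamma\,\xi$ is bounded using $\Gamma^T S_0(0)\Gamma \prec F_1$ and $\|\xi_i\|_2\le 1$ to get $\sum_i f_{1i}$, then combine with the disturbance budget $f_2$ and the performance scalars via the same $t$-optimization ($\gamma = \sqrt{2\gamma t - t^2}$ optimized at $t^\ast=\sqrt{\tilde\gamma}$) that appears in the proof of Theorem~\ref{thm2}, giving the aggregated bound $b+\sum_i f_{1i}+f_2 < 2\gamma$ in \eqref{eqn_thm1_UIC}.

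The main obstacle, I expect, is the elimination step (2): showing that the existence of controller matrices making the full closed-loop LMI hold is \emph{equivalent} to the decoupled conditions \eqref{eqn_thm1_FSC}--\eqref{eqn_thm1_coupling} requires carefully identifying the two complementary subspaces (the kernels of $[B_{2s}^T\ B_{2p}^T\ D_{12}^T]$ and of $[C_{2s}\ C_{2p}\ D_{21}]$), verifying the solvability (``coupling'') condition that glues the $R$- and $S$-side variables, and — the genuinely new bookkeeping relative to \cite{farhood2021lpv} — checking that the block-partitioned initial-condition constraint $\Gamma^T S_0(0)\Gamma\prec F_1$ with $F_1=\operatorname{diag}(f_{11}I_{s_1},\dots,f_{1a}I_{s_a})$ is consistent with the projection argument (it only touches the $S_0(0)$ variable and the scalars $f_{1i}$, so it passes through the elimination untouched, but this must be stated). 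I would present the bulk of (2) by invoking the corresponding result of \cite{farhood2021lpv,farhood2012nonstationary} and then detailing only the modifications needed for the refined initial-condition set and the eventually periodic horizon, since the paper already positions Theorem~\ref{thm1} as ``closely related'' to that prior work. The remaining steps (1), (3), (4) are then routine given the template of the Theorem~\ref{thm2} proof already in the paper.
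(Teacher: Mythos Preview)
Your proposal is correct in spirit and would produce a valid proof, but it takes a somewhat different route from the paper's. The paper does not work directly with a closed-loop dissipation inequality carrying the nonzero initial state; instead it constructs an $(\bar h+1,q)$-eventually periodic system $(\tilde G,\tilde\Delta)$ \emph{isomorphic} to $(G,\Delta)$ that has \emph{zero} initial state and $a+1$ disturbance channels: channels $1,\dots,a$ are active only at $k=0$ and inject $\xi_1,\dots,\xi_a$, while channel $a+1$ carries $d$ for $k>0$. A result adapted from \cite{pirie2002robust} then relates the performance inequality \eqref{eqn_perfinequality2} to a standard $\ell_2$-induced-norm condition on the scaled system $\tilde E^{-1/2}(\tilde G,\tilde{\bm\Delta})\tilde F^{-1/2}$, with $\tilde E=\mathrm{diag}(bI,bI,\ldots)$ and $\tilde F=\mathrm{diag}(F_1,f_2I,f_2I,\ldots)$. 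At that point the synthesis conditions follow verbatim from Theorem~2 of \cite{farhood2021lpv}; the only new ingredient is that for $a>1$ the first block of $\tilde F$ becomes $F_1=\mathrm{diag}(f_{11}I_{s_1},\dots,f_{1a}I_{s_a})$ and the aggregate budget becomes $\sum_i f_{1i}$, which is where the proof of Theorem~2 in \cite{farhood2024robustness} is invoked.

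The practical difference is this: the paper's isomorphism absorbs the initial condition into the disturbance channel \emph{before} any LMI manipulation, so the elimination/projection step and the coupling conditions are inherited wholesale from the prior $a=1$ result and no new bookkeeping is required. Your direct approach keeps $z_0$ as an initial state, bounds $z_0^TS_0(0)z_0$ via $\Gamma^TS_0(0)\Gamma\prec F_1$ after the fact, and must argue (as you note) that this constraint survives the elimination lemma untouched. That is true, and your $t$-optimization in step~(4) mirrors the paper's proof of Theorem~\ref{thm2}, so the arithmetic closes the same way. Your route is more self-contained; the paper's is shorter because it reduces everything to a quotation of \cite{farhood2021lpv}. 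One small slip: in Theorem~\ref{thm1} the coupling pairs are $(g,f_2)$ and $(t,b)$, not $(g,t)$ as in Theorem~\ref{thm2}; this does not affect your argument but should be stated correctly.
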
 \vskip 0.05in
\begin{proof}
This theorem is similar to Theorem~2 in our previous work \cite{farhood2021lpv}, with the key difference being that this result allows the uncertain components of the initial state to lie in separate sets by imposing distinct constraints on the components $\xi_1, \, \dots, \, \xi_a$ of the vector $\xi$. For $a = 1$, the two theorems become equivalent. The proof can be deduced from those of counterpart results in our earlier works \cite{farhood2021lpv, farhood2024robustness}. The basic idea involves constructing an $(\bar{h}+1,\,q)$-eventually periodic system $(\tilde{G},\,\tilde{\Delta})$ that is isomorphic to $(G,\, \Delta)$. This isomorphic system has a zero initial state and features $a+1$ disturbance input channels. The first $a$ channels are relevant only at $k=0$ and take values $\xi_1, \, \xi_2, \, \dots, \, \xi_a$, respectively. The $(a+1)^{th}$ channel corresponds to the exogenous disturbance $d$ and is relevant only for $k > 0$. As in \cite{farhood2008control}, the work in \cite{farhood2021lpv} adapts a key result  from \cite{pirie2002robust} to the considered problem formulation, which relates the performance inequality in  (\ref{eqn_perfinequality2}) for $a=1$ to a condition in terms of the standard $\ell_2$-induced norm performance measure. As a result, the synthesis problem can be  stated in this case as finding an $(\bar{h}+1,\,q)$-eventually periodic synthesis for the scaled uncertain system $\tilde{E}^{-1/2}(\tilde{G},\,\tilde{\bm{{\Delta}}})\tilde{F}^{-1/2}$ that renders the $\ell_2$-induced norm of the closed-loop input-output map less than one for all $\tilde{\Delta}\in\tilde{\bm{\Delta}}$. Here, $\tilde{E} = \mathrm{diag}(bI,\,bI,\,\ldots)$ and $\tilde{F} = \mathrm{diag}(f_1I,\,f_2I,\,f_2I,\,\ldots)$ are memoryless block-diagonal operators, where $b$, $f_1$, $f_2$ are positive scalars that satisfy $b+f_1+f_2 < 2\gamma$. Unlike \cite{farhood2021lpv}, the isomorphic system formulated in our problem has more than one disturbance input channels active at $k = 0$. Consequently, adapting the aforementioned result from \cite{pirie2002robust} to our case (which only gives a sufficient condition when $a>1$), the first block of $\tilde{F}$ becomes $F_1 = \mathrm{diag}(f_{11}I_{s_1},\, f_{12}I_{s_2},\, \dots,\, f_{1a}I_{s_a})$, and $f_1$ in the preceding inequality is replaced with $\sum_{i=1}^a f_{1i}$; these modifications  can also be deduced from the proof of Theorem~2 in \cite{farhood2024robustness}. The remaining conditions in the theorem statement can then be derived by directly following the steps outlined in the proof of Theorem 2 in \cite{farhood2021lpv}.
\end{proof}\vskip 0.05in

The $(\bar{h},\,q)$-eventually periodic dynamic controller $(G_c,\, {\Delta}_c)$ is defined by the following equations: 
\begin{equation}
    \begin{bmatrix} z^c_{k+1} \\ \varphi^c_k \\ u_k  \end{bmatrix} = 
    \begin{bmatrix} A^c_{ss}(k) & A^c_{sp}(k) & B^c_{s}(k) \\ A^c_{ps}(k) & A^c_{pp}(k) & B^c_{p}(k) \\ C^c_{s}(k) & C^c_{p}(k) & D^c(k) \end{bmatrix} 
    \begin{bmatrix} z^c_k \\ \vartheta^c_k \\ \hat{y}_k \end{bmatrix}, \quad \vartheta^c_k = \Delta_c(k)\varphi^c_k, \quad z^c_0 = 0,
    \label{eqn_LFTconOF}
\end{equation}
where $\Delta_c(k) = \mathrm{diag}(\delta_k^{[1]} I_{m^c_1(k)}, \, \dots, \, \delta_k^{[p]} I_{m^c_p(k)})$. If the conditions in Theorem~\ref{thm1} are feasible, this controller can be constructed from the solutions $b$, $f_2$, $R_i(k)$, $S_i(k)$ for $i = 0,\,1,\,\dots,\,p$ and $k = 0,\,1,\,\dots,\,\bar{h}+q-1$ using the procedure outlined in \cite{farhood2021lpv}.

\end{document}